\documentclass[final,onefignum,onetabnum]{siuro210301}

\usepackage{lipsum}
\usepackage{amsfonts}
\usepackage{graphicx}
\usepackage{epstopdf}
\usepackage{algorithmic}
\ifpdf
  \DeclareGraphicsExtensions{.eps,.pdf,.png,.jpg}
\else
  \DeclareGraphicsExtensions{.eps}
\fi

\usepackage{enumitem}
\setlist[enumerate]{leftmargin=.5in}
\setlist[itemize]{leftmargin=.5in}


\newsiamremark{remark}{Remark}
\newsiamremark{hypothesis}{Hypothesis}
\crefname{hypothesis}{Hypothesis}{Hypotheses}
\newsiamthm{claim}{Claim}

\setcounter{footnote}{1}

\headers{Testing for the Important Components of Posterior Predictive Variance}{D. Dustin, and  B. Clarke}

\title{Testing for the Important Components of Posterior Predictive Variance}

\author{Dean Dustin \thanks{Department of Statistics, University of Nebraska, Lincoln,  
  (\email{ddustin8@huskers.unl.edu}).}
\and Bertrand Clarke \thanks{Department of Statistics, University of Nebraska, Lincoln, 
  (\email{bclarke3@unl.edu}).}
}




\usepackage{amsopn}

\ifpdf
\hypersetup{
  pdftitle={Testing for the Important Components of Posterior Predictive Variance},
  pdfauthor={D. Dustin,  and B. Clarke}
}
\fi

\begin{document}

\maketitle

\begin{abstract}
We give a decomposition of the posterior predictive variance using the
law of total variance and conditioning on a finite dimensional discrete random variable.
This random variable summarizes various features of modeling that are used
to form the prediction for a future outcome.
Then,
we test which terms in this decomposition are small enough to ignore.  This allows us
identify which of the discrete random variables are most important
to prediction intervals.  The terms in the decomposition admit interpretations 
based on conditional means and variances and are analogous to the terms in a
Cochran's theorem decomposition of squared error often used in
analysis of variance.  Thus, the modeling features are treated as factors
in  completely randomized design.   In cases where there are multiple decompositions
we suggest choosing the one that that gives the best predictive coverage with the
smallest variance.  

\textbf{Keywords} --- prediction intervals, posterior predictive variance,  law of total variance, 
 Bayes model averaging,  stacking, ANOVA, bootstrap testing,Cochran's theorem.%
\end{abstract}%

\section{Introduction}
\label{intro}

At the risk of oversimplification, it is usually the case that
more complex data leads to more complex models and more complex models
in turn lead to greater
demands for validation.    Moreover, the ultimate form of validation is predictive:
Models that do not achieve good prediction are discredited.    It is a slipperier question
when two models achieve similar predictive performance in finite samples although
in such cases we may sometimes rely on asymptotics, robustness, or other
properties to help us decide which model
is more appropriate -- assuming they are incompatible.

In the Bayesian context, the most common predictor is the posterior predictive distribution
that has density $p(Y_{n+1} \vert {\cal{D}}_n)$, where ${\cal{D}}_n$ is the data available
before the next response $Y_{n+1}$ is revealed.  The posterior predictive is optimal under a
relative entropy criterion.  Moreover, often, the squared error loss is invoked to justify
using the posterior predictive mean $E(Y_{n+1} \vert {\cal{D}}_n)$ as a predictor
for $Y_{n+1}$.  In this case, predictive intervals (PI's) are derived from the distribution of
\begin{eqnarray}
\frac{Y_{n+1} - E(Y_{n+1} \vert {\cal{D}}_n)}
{\sqrt{\hbox{Var}(Y_{n+1} \vert {\cal{D}}_n) } }.
\label{CLTform}
\end{eqnarray}
That is, we find $c_\alpha$, $\alpha > 0$, so that
\begin{eqnarray}
 1-\alpha \leq 
P\left(  \left|  \frac{Y_{n+1} - E(Y_{n+1} \vert  {\cal{D}}_n) }
{\sqrt{\hbox{Var}(Y_{n+1} \vert  {\cal{D}}_n) }} \right|
< c_\alpha \left| \right. { \cal{D}}_n \right)
\nonumber
\end{eqnarray}
where the probability $P$ applies to $Y_{n+1}$ and hence obtain 
\begin{eqnarray}
PI(\alpha) = 
E(Y_{n+1} \vert  {\cal{D}}_n ) \pm c_\alpha \sqrt{\hbox{Var}(Y_{n+1} \vert {\cal{D}}_n )}.
\label{PIalpha}
\end{eqnarray}
From \eqref{PIalpha}, it is easy to see that $E(Y_{n+1} \vert  {\cal{D}}_n ) $ controls the location
of the PI while $\hbox{Var}(Y_{n+1} \vert {\cal{D}}_n )$ controls its width.
When $p(Y_{n+1} \vert {\cal{D}}_n)$ is symmetric, unimodal and based on well-behaved data,
e.g., many classes of independent data,
$c_\alpha$ is a quantile of the posterior predictive distribution.  This can be
extended to more general distributional shapes.

It is seen that \eqref{CLTform} is much like a $t$-statistic.  Indeed, 
\begin{align*}
Var(Y_{n+1} - E(Y_{n+1} \vert  {\cal{D}}_n) \vert {\cal{D}}_n) & = Var(Y_{n+1} \vert  {\cal{D}}_n) + Var(E(Y_{n+1} \vert  {\cal{D}}_n) \vert  {\cal{D}}_n)\\
&= Var(Y_{n+1} \vert {\cal{D}}_n) + E(E(Y_{n+1} \vert  {\cal{D}}_n)^2 \vert  {\cal{D}}_n) - E(E(Y_{n+1} \vert  {\cal{D}}_n) \vert  {\cal{D}}_n)^2\\
&= Var(Y_{n+1} \vert  {\cal{D}}_n) + E(Y_{n+1}\vert  {\cal{D}}_n)^2  - E(E(Y_{n+1} \vert  {\cal{D}}_n) \vert  {\cal{D}}_n)E(E(Y_{n+1} \vert  {\cal{D}}_n) \vert  {\cal{D}}_n)\\
&= Var(Y_{n+1} \vert  {\cal{D}}_n) + E(Y_{n+1}\vert  {\cal{D}}_n)^2  -  E(Y_{n+1}\vert  {\cal{D}}_n)^2  \\
&= Var(Y_{n+1} \vert  {\cal{D}}_n).
\end{align*}
So, \eqref{CLTform} is of the form of a location divided by its root variance or standard error.

The goal of this paper is to present an additive
decomposition for $\hbox{Var}(Y_{n+1} \vert {\cal{D}}_n )$
that has three key properties:  i) The terms are individually interpretable as a sort of
variability intrinsic to $Y_{n+1}$; ii) Each term can be tested to see if it is small enough relative
to the other terms that it can be neglected, and iii) Taken together the decomposition of
$\hbox{Var}(Y_{n+1} \vert {\cal{D}}_n )$ is analogous to Cochran's theorem including
allowing flexibility as to how many terms are included.
The implication of this analysis is that the components of posterior predictive variance
can be examined to determine what they say about the various ingredients used to
formulate the model.
That is, we may consider a variety of modeling schemes with different components and test 
to see which is most appropriate and then within the most appropriate modeling
scheme test which terms in the posterior predcitive variance should 
be retained -- or discarded.

More pragmatically, we present methodology for expanding or reducing a model list. 
The general methodology rests on the use of hypothesis testing 
to determine which of the choices we make to form predictors affect
the predictive distribution most. Also, we consider coverage and width of PI's
 from competing posterior predictive distributions.  Our overall goal is to form the smallest prediction
intervals possible that have close to the nominal coverage.   
This is seen in the example in Sec. \ref{example}.

Our analysis rests on applying the law of iterated variances to future outcomes.  Let
$V$ be a random variable and write the predictive variance decomposition
\begin{eqnarray}
\hbox{Var}(Y_{n+1} \vert {\cal{D}}_n) = E ( \hbox{Var}(Y_{n+1} \vert V, {\cal{D}}_n) )
+ 
 \hbox{Var} (E (Y_{n+1} \vert V, {\cal{D}}_n)).
\label{lawtotalvariance}
\end{eqnarray}
In our examples here, $V$ will typically be discrete although continuous $V$'s
satify \eqref{lawtotalvariance} as well.
The first term on the right 
can be interpreted as the average location of the variance taking into account
the variability of $V$.  The second term is the variability contributed by $V$ to
the location of the predictive distribution.   If the second term is small, then we
know that $(Y_{n+1} \vert {\cal{D}}_n)$ is not affected much by the variability of $V$
so it may make sense to ignore this term.  On the other hand, if the first term
is small,  then the contribution of $V$ to the variance of $(Y_{n+1} \vert {\cal{D}}_n)$
may be ignored.  The distinction between these two terms is whether $V$ affects
the variability in location or the variability in variance.  

Loosely,  the values $V$ assumes represent some feature of the modeling strategy
for the sequence of random variables in ${\cal{D}}_n = \{x_1, y_1; \ldots ; x_n, y_n\}$
where the $x_i$'s are $p$-dimensional explanatory variables giving response $y_i$
under some error struture.   For instance, as seen in Sec. \ref{example}, $V$ may represent
the choice of a shrinkage method in penalized linear regression.   In other
examples here, $V$ may represent a link function in generalized
linear models,  a nonlinear regression technique., or a selection of variables.   

We can take $V = V_K= (V_1, \ldots , V_k, \ldots, V_K)$ and apply \eqref{lawtotalvariance} iteratively to itself, 
generating a new term at each iteration.  This gives us $K+1$ terms that can be
interpreted in terms of means and variances.   Thus,  we must choose
a $K$ and we can regard each $V_k$ as an aspect of a modeling strategy.  
For instance, $V_1$ may be a `scenario' and $V_2$ may be a `model' in the sense of
\cite{Draper:1995}, a parallel we develop in Sec. \ref{challenger}.  
If we write
$\hbox{Var}_{V_K}(Y_{n+1} \vert {\cal{D}}_n)$ to mean the posterior predictive variance
using a specific choice of $V_K$, it is easy to see, in general, that for another 
choice, say, $V^\prime_{K^\prime}$,
we will usually find
$\hbox{Var}_{V_K}(Y_{n+1} \vert {\cal{D}}_n) 
\neq \hbox{Var}_{V^\prime_{K^\prime}}(Y_{n+1} \vert {\cal{D}}_n)$.   On the other hand, the relative sizes of terms in 
decompositions of the form \eqref{lawtotalvariance} depend delicately on the 
choice of $K$ and $V_K$.   Fortunately,
in practice, we usually only have one $V_K$ that we most want to consider,  but the
order may matter and it is partially a matter of statistical judgement how big
$V_K$ should be and what components it should have.

One choice of $V_K$, with $K=2$, that can be used in general to expand a model list
to capture more possibilities and then winnow down to the most successful of them
is the following.    Consider trying to assess the importance of sets of variables in a 
predictive modeling situation. Suppose we have a list of models that we are considering, 
${\cal{M}} = \{m_1, \ldots, m_q \}$ and we have a set of explanatory variables 
${\cal{X}} = \{ X_1, \ldots, X_p\}$. Write ${\cal{P(X)}} = \{ \{X\}_1,\ldots , \{X\}_{2^p} \}$ for
the power set of ${\cal{X}}$. Now we can consider each model with each subset of 
explanatory variables as inputs to the modeling.   Here, $V_1$ corresponds to
the uncertainty in the predictive problem due to the models and $V_2$
corresponds to the variables we use in the models.   We give an example
of this in Subsec. \ref{disaster}.

Using a Bayes model average we write the posterior predictive density as 
\begin{eqnarray}
p(Y_{n+1} \vert {\cal{D}}_n) 
= \sum^q_{i=1} p( m_i \vert {\cal{D}}_n) \sum^{2^p}_{j=1} 
p(\{X\}_j \vert {\cal{D}}_n, m_i ) p(Y_{n+1} \vert {\cal{D}}_n, \{X\}_j, m_i),
\label{postpredBMA}
\end{eqnarray}
generically denoting prior densities as $p$.
Now, we can the calculate posterior probability for each 
set of explanatory variables from
\begin{eqnarray}
 p(\{X\}_j\vert {\cal{D}}_n) = \sum^q_{i=1} p( m_i \vert {\cal{D}}_n) p(\{X\}_j \vert {\cal{D}}_n, m_i ).
 \label{postprobvarset}
\nonumber
\end{eqnarray}
This posterior probability is a measure of ``variable set importance''.  A similar expression gives 
a measure of importance for an individual model.   
 

Once $K$ and $V_K$ have been chosen, the decomposition based on $V_K$ can
be generated and examined for which terms are important.  We do this using a bootstrap
testing procedure.  We regard our testing procedure as an approximation to the
tests that emerge from a Cochran's theorem decomposition that are known to
be $F$-tests.  The reason is that, at least superficially, our general posterior
predictive decomposition
resembles the Cochran's theorem decomposition of the squared error into a
sum of quadratic forms; see Subsec. \ref{analog}.
In fact, our tests resemble ratios of $\chi$-squared distributions
but we cannot ensure the independence or determine the degrees of freedom explicitly.
We resort to a bootstrap procedure since the terms we want to test for
proximity to zero are latent quantities, i.e.,  they do not directly depend on the data, 
and hence do not have an accessible likelihood.  The overall procedure clearly involves
several steps; an example demonstrating our basic 
methodology is given in Sec.\ref{example}.



Another point bears comment:  We have written our methodology in terms of
conditioning and the Bayesian approach.  We think this is the `right' way to apply the
information in the data to an inferential problem.    Nevertheless, 
many authors do not regard Bayes model averaging
as the only or main way to express model uncertainty.   For instance, stacking coefficients,  
see \cite{Wolpert:1992}, are often used for model averaging and can also be
regarded as summaries of model uncertainty.   Indeed, in some cases, stacking coefficients,
being based on a cross-validation optimization,  may be easier to work with.
Moreover,  \cite{Zhang:Liu:2022} show that when the true model is on the model list,
its stacking weight is asymptotically one and otherwise the stacking average converges
to the predictively optimal model on the list.
In addition,  \cite{Yao_etal_2018} and \cite{Clarke:2003}
argues that for predictive purposes stacking distributions and means, respectively,
often outperforms Bayes model averaging.
Consequently, in our work below, we sometimes give
the stacking analog to \eqref{lawtotalvariance} to demonstrate the generality of
our approach.

Continuing the example of \eqref{postpredBMA},
suppose the collection of models ${\cal{M}}$ are not easily implementable in the Bayesian setting. Then we can use stacking instead and define a similar predictive distribution by 
\begin{eqnarray}
p(Y_{n+1} ) 
= \sum^q_{i=1} w( m_i ) \sum^{2^p}_{j=1} w(\{X\}_j \vert m_i ) p(Y_{n+1} \vert \{X\}_j, m_i),
\label{postpredstack}
\nonumber
\end{eqnarray}
where the dependence of the summands on ${\cal{D}}_n$ is suppressed.
Now,  the stacking weights for each set of explanatory variables are
 $$
 w(\{X\}_j ) = \sum^q_{i=1} w( m_i ) w(\{X\}_j \vert m_i ).
 $$
Computing these weights is a quadratic programming problem that can be done easily for a reasonable number of explanatory variables.   

The structure of this paper is as follows.  We begin in Sec.\ref{example} with an
example of how our methodology can be used to determine which shrinkage method
within a finite collection of shrinkage methods is best in the sense of minimizing
the posterior predictive variance.    The justification for our method is only briefly mentioned
since the focus is on implementation.    Sec. \ref{derive} presents our full method
with justifications.  There are subsections to explain the variance decomposition
in terms of quadratic forms and the testing procedure for terms in a variance decomposition.
Secs. \ref{challenger} and
\ref{superconduct} contain two further examples.  The first shows how our work goes
beyond \cite{Draper:1995} and the second shows a complete real data example
where a method such as the one we propose is more useful for uncertainty quantification
than other
conventional direct modeling methods, at least predictively.
We conclude with a discussion of our overall contribution in Sec. \ref{discussion}.

\section{A Numerical Example}
\label{example}

An example will make the point regarding the importance of 
the last term in \eqref{lawtotalvariance}.
There has been much discussion about when different shrinkage methods are
appropriate, see \cite{Wang:etal:2020} for instance.   
The consensus from simulations and applications
seems to be that for easy, general use LASSO or Elastic Net (EN,a generalization of LASSO)
are usually best when there is enough sparsity in the data and multicollinearity is not
a problem.  Otherwise,  when sparsity is low or multicollinearity is a problem 
ridge regression is preferable.
Our techniques provide a more formal basis for this intuitive summary of examples.

Let us compare five penalized methods, namely LASSO,
Ridge Regression (RR), Adaptive LASSO (ALASSO), EN, and Adaptive EN (AEN)
applied to a linear model 
\begin{eqnarray}
Y_i = X_i^T \beta + \epsilon_i
\nonumber
\end{eqnarray}
for $i=1, \ldots , n$ where $X_i$ is a vector of explanatory variables with 
$\dim(X_i) = \dim(\beta) = p < n$
and $\epsilon_i \sim N(0, 1)$ IID
and write $m_1, \ldots, m_5$ to mean the five penalty functions.
Write $V$ to be the discrete random variable assuming values over the five methods
i.e., over the $m_j$'s.   

Let us now apply the two term variance decomposition in \eqref{lawtotalvariance}
using $V$.  We suspect that the second term 
on the right is small relative 
to the left hand side because we think the models from the five methods will be very similar,
i.e., they will have similar locations even if their variances are not identical.   That is,
we suspect that a hypothesis test of
$$
H_0:  E\left( \frac{Var_{V}E(Y_{n+1}\vert {\cal{D}}_n, V)}{Var(Y_{n+1} \vert  {\cal{D}}_n)}\right) \geq 0.05
$$ 
versus 
$$
H_1:  E\left( \frac{Var_{V}E(Y_{n+1}\vert {\cal{D}}_n, V)}{Var(Y_{n+1} \vert  {\cal{D}}_n)}\right) < 0.05
$$
will end up rejecting the null, meaning we can drop the second term in \eqref{lawtotalvariance}
and the .05 level.

To investigate the behavior of the terms in the predictive variance decomposition 
we generate data as follows.    Let $n=50$ and $p=100$ and take 95 of the $\beta_j$ 
coefficients to be zero and five to be generated independently from a $N(5, (1.5)^2)$.  
We will see below, Subsec. \ref{testing}, that this test can be performed by bootstrapping
the argument of the expectation in the null hypothesis.  In fact, for normal error, the
distributions of the numerator and the denominator are, approximately, convex combinations 
of $\chi^2$ distributions.  So their ratio is expected to behave like an $F$ distribution.
The convex combinations can be exactly defined but are generally inaccessible
numerically.  Consequently, our bootstrap-based testing procedure is a simplified nonparamertic
approximation to the standard normal theory.   

To set up our analysis of the simulated data, we used the first 49 data points to form
predictive distributions for each of the five methods as well as for the stacking
average (based on five-fold cross-validation ) of the five methods.  That is,
we found the stacking weights $\hat{w}_1, \ldots , \hat{w}_5$
as well as the $\beta$ coefficients and the decay parameters for each of the five
methods.    For the stacking weights we imposed both the positivity and sum-to-one
constraints.   We use stacking rather than Bayes model averaging in
this example because stacking weights are predictive by construction and
we used {\sf glmnet}, a frequentist implementation, for our computations.  

We write the stacking model average as
\begin{eqnarray}
\sum_{j=1}^5 \hat{w}_j({\cal{D}}_{49}) p( Y_{50} \vert X_{50}, m_j)({\cal{D}}_{49})
\label{stackmix}
\end{eqnarray}
where we have indicated the dependence of the $\hat{\beta}_j$'s in the model
$p$ by writing ${\cal{D}}_{49}$ in parentheses on the right.    We also set
\begin{eqnarray}
p( Y_{50} \vert X_{50}, m_j)({\cal{D}}_{49}) = N( X_{50} \hat{\beta}_{m_j}, 
\hat{\sigma}^2_{m_j} + \widehat{Var}(X_{50}\hat{\beta}_{m_j}) )
\label{normalpredictive}
\end{eqnarray}
where the estimation of the decay parameters
$\lambda_j$ is suppressed in the $m_j$'s and $\hat{\sigma}^2_{m_j}$ 
is the standard OLS estimator of $\sigma^2$ using only the variables selected by $m_j$
-- except for RR where we use the $\hat{\sigma}$ from EN since it is a combination
of the $L^1$ and $L^2$ penalties.  We justify this by citing \cite{Zhao:etal:2021}
who showed that this procedure is consistent for LASSO.  We also observe that
the proof can be extended to EN and, we think, to any shrinkage method
with the oracle property (e.g., AEN and ALASSO).
To find $\widehat{Var}(X_{50}\hat{\beta}_{m_j})$ 
we use the bootstrapped variance estimator from the
{\sf boot} package in R.  

Now, we draw another $100,000$ data points from each of the five models.
Then we sample $n_j = 100,000\hat{w}_j$ from each model 
$p( Y_{50} \vert X_{50}, m_j)({\cal{D}}_{49})$, for $j=1, \ldots, 5$.
This gives us 100,000 data points from the stacking mixture \eqref{stackmix}.  We use these
data points to assess coverage of the PI's from the five shrinkage methods
and their stacking average.  The PI for stacking is of the form
$PI_{stack}(.05) = [ q_{.025}, q_{.975}]$ where the $q$'s are the quantiles
from \eqref{stackmix}.  Similarly, we have $PI_{m_j}(.05) = [q_{.025}^{m_j}, q_{.975}^{m_j}]$.
This gives us 6 PI's.  

To estimate the empirical coverage of the six PI's, we use the bootstrap again 
now on the entire procedure up to this point.    We choose $B=1000$.
Letting $j=1, \ldots , 6$ index the predictive distributions --- $j=6$ corresponds to the
stacking average --- the result is
\begin{eqnarray}
\widehat{\sf Coverage}_j = \frac{1}{B}\sum_{b=1}^{B} \chi_{\{Y_{b} \in PI_{j,b}\}}.
\nonumber
\end{eqnarray}
We also have the bootstrapped variance from the $j$-th predictive
distribution from the RHS of \eqref{normalpredictive}.  This
procedure bootstraps the three terms in \eqref{lawtotalvariance}.
The details on enforcing the null hypothesis are in Subsec. \ref{testing}.
Essentially, we get a bootstrapped $p$-value, commonly called the
achieved significance level (ASL), and reject when the ASL is too small.
Our computed results are summarized in \Cref{simexample}.

\begin{table}[H]
\caption{\textbf{Stacking shrinkage methods}:  This table gives the stacking weights,
the variances of the predictive distributions, and the coverage of the PI's
for five shrinkage methods and their stacking average.}
\label{simexample}
\centering
\begin{tabular}{rrrrrrr}
  \hline
 & STK avg &  LASSO & RR & ALASSO & EN & AEN \\ 
  \hline
Stacking weights &  & 0.74 & 0.00 & 0.00 & 0.25 & 0.00  \\ 
Pred. Variance & 2.97 & 1.02 & 6.71 & 0.99 & 6.73 & 6.70 \\ 
  Coverage & 0.97 & 0.98 & 0.43 & 0.12 & 0.94 & 0.25 \\ 
   \hline
\end{tabular}
\end{table}

We see in \Cref{simexample} that only LASSO and EN have positive stacking weights. 
LASSO achieves greater than the nominal 95\% coverage while EN is slightly less at 94\% 
despite having a much larger predictive variance than LASSO.
The stacked predictive distribution has an estimated variance of $2.97$ and decomposes as 
$$
\widehat{Var}(Y_{50}) = 
\hat{E}_{V}(\widehat{Var}(Y_{50}\vert V)) + \widehat{Var}_{V}\hat{E}(Y_{50}\vert V) = 2.39 + 0.58.
$$
Hence we see the ratio of the between-models variance to total variance is 
$$
\frac{\widehat{Var}_{V}\hat{E}(Y_{50}\vert V)}{\widehat{Var}(Y_{50}) } = \frac{0.58}{2.97} = 0.195.
$$
Informally, this suggests that there is too much between-models variance to ignore 
when making predictions.

More formally, using our test, we obtain an $\widehat{ASL} =0.99$ meaning
we cannot reject the null.  This leads us to conclude
that the second term on the LHS of \eqref{lawtotalvariance} contributes more
than 5\% of the total predictive variance.  
Consequently,  we should account for model uncertainty when making predictions.

Despite both LASSO and EN having good coverage,  the small size of $n$ relative to $p$
leads us to ask what level of between-models variance would lead 
to rejection.?  We observe that
if we change the RHS of $H_0 $ and $H_1 $ to $0.09$ instead of $0.05$, our test 
gives an $\widehat{ASL} = .0095$.  Hence, we would conclude 
that 9\% is the smallest percentage at which we could ignore the contribution
of the between-models variance to the overall variance.

To conclude this example, observe that since we want the correct nominal 
predictive coverage with the smallest $K$ and $V$,
we can look at \Cref{simexample} and reason as follows.
LASSO has smaller or equivalent variance to the other methods
and at least the desired coverage.   We can rule out ALASSO on the basis of
its poor coverage and zero stacking weight.
Thus, if we choose, say, 10\% (or any number bigger than 9\%)
as our threshold,
we are led to use PI's from LASSO.  That is, $V$ reduces to a single level.
We provide more discussion on choosing $K$ and $V$ in Sec. \ref{discussion}.

\section{Decomposing the Posterior Predictive Variance}
\label{derive}

In this section we give our variance decomposition in full generality,
indicate how to choose amongst candidate variance decompositions,
and explain our testing procedure for the terms in a given
variance decomposition.  We will see that our decomposition
of the posterior predictive variance is analogous to the Cochran's theorem decomposition
of the squared error into quadratic forms
used in analysis of variance.  The analogy is limited by the fact
that our terms are Bayesian and only approximately $\chi$-squared.

\subsection{The Effect of the Model List on Overall Variance}
\label{setting}

We can enlarge model list simply by including more plausible models.
However, this may lead to problems such as dilution; see \cite{George:2010}.
So, we want to assess the effect of a model list on the variance
of predictions.    Consider a model list ${\cal{M}}$ and suppose
we don't believe it adequately captures the uncertainty (including mis-specification)
of the the predictive problem. 
We can expand the list by including other competing 
models and this can be done by adding more models to it or by embedding the
models on the list in various `scenarios' as is done in \cite{Draper:1995}. 
Once a new model 
list ${\cal{M}'}$ is constructed, if it contains new models with positive posterior probability, 
the posterior predictive distribution $p(Y_{n+1} \vert {\cal{D}}_n)$ resulting from
${\cal{M}'}$ will be different than $p(Y_{n+1} \vert {\cal{D}}_n)$ from using  
${\cal{M}}$.  More formally, if ${\cal{M}}$ is a model list. Then, the ${\cal{M}}$-dependent 
predictive distribution is
 $$
 p(Y_{n+1} \vert {\cal{D}}_n)=p(Y_{n+1} \vert {\cal{D}}_n)({\cal{M}}).
 $$
In our variance decomposition below, we include the dependence on the model
list by $V$.
Clearly, the typical case is
$Var(Y_{n+1} \vert {\cal{D}}_n)( {\cal{M}'} ) \neq Var(Y_{n+1} \vert {\cal{D}}_n)({\cal{M}} )$,
so the posterior predictive variance depends on the model list i.e., on the choice of $K$
and $V_K$.

\subsubsection{Posterior Predictive Variance Decomposition ``P-ANOVA''}
\label{PANOVA}

To quantify the uncertainty of the subjective choices we must make,
recall $V= (V_{1}, \ldots , V_{K})$, where $V_k$ 
represents the values of the $k$-th potential choice that must
be made to specify a predictor. 
Analogous to terminology
in ANOVA, we call $V_k$ a \textit{factor} in the prediction scheme, and 
we define the levels of $V_k$ to be $v_{k1}, \ldots , v_{km_k}$.  That is,  $v_{k\ell}$ is a 
specific value $V_k$ may assume. Thus, $V$ is discrete and has probability mass 
function $W(v)=W(V_1=v_1 \ldots, V_K = v_K)$.  The $V_k$'s are not in general
independent and $W$ corresponds to a prior on $V$. 
Define our chosen model list by
 \begin{equation}
\label{pred_scheme}
{\mathcal{V}}^{K} =  \{v_{11} ,\ldots , v_{1m_1}\} \cup \ldots \cup \{v_{K1} ,\ldots , v_{Km_K}\}.
\nonumber
\end{equation}
There are $m_1 \times \cdots \times m_K$ distinct models in ${\cal{V}}^{K}$ 
and they may or may not have a hierarchical structure.  

Our first result gives a decomposition of the posterior predictive variance by
conditioning on $V$.  

\begin{proposition}
\label{General_pred_variance_prop}
(BMA Variance)  
We have the following two expressions for the posterior predictive variance.\\
{\it Clause (i):}
For $K=1$ we have \eqref{lawtotalvariance} and for
$K \geq 2$, the posterior predictive variance of $Y_{n+1}$ as function of 
the $K$ factors defining the predictive scheme is 
\begin{align}
\label{Conditional_Var_sum}
\nonumber 
Var(Y_{n+1} \vert  {\cal{D}}_n)(  {\mathcal{V}}^{K}) 
& = E_{(V_1,\ldots, V_k)} Var(Y_{n+1} \vert  {\cal{D}}_n, V_1, \ldots, V_K) \\ \nonumber 
& + \sum_{k=2}^{K} E_{(V_{1}, \ldots, V_{k-1} )} Var_{V_k}E(Y_{n+1} \vert  {\cal{D}}_n, V_1, \ldots, V_k) \\
& + Var_{V_1}E(Y_{n+1} \vert  {\cal{D}}_n, V_1).
\end{align}
\\
{\it Clause (ii):}
For any $K$, the posterior predictive variance 
$Var(Y_{n+1} \vert  {\cal{D}}_n)(  {\mathcal{V}}^{K})$ can be condensed into a two term decomposition:
\begin{align}
\label{condensed_var}
\nonumber Var(Y_{n+1} \vert  {\cal{D}}_n)( {\mathcal{V}}^{K}) 
& = E_{(V_1,\ldots, V_K)} Var(Y_{n+1} \vert {\cal{D}}_n, V_1, \ldots, V_K) \\ 
& + Var_{(V_1,\ldots, V_K)}E(Y_{n+1} \vert  {\cal{D}}_n, V_1, \ldots, V_K).
\end{align}
\end{proposition}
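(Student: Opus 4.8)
The plan is to prove both clauses as successive applications of the law of total variance \eqref{lawtotalvariance}, treating every quantity as conditional on ${\cal{D}}_n$ and suppressing this conditioning in the notation. The only facts I need are that $Y_{n+1}$ has a finite second moment under the posterior predictive (so each variance below is well defined), that $V$ is discrete with known pmf $W$ (so all averages over the $V_k$ are finite sums and no integrability question arises), and the elementary identity $Var(U) = E(Var(U \vert Z)) + Var(E(U \vert Z))$ for an arbitrary conditioning variable $Z$.

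Clause (ii) is immediate. I would apply the law of total variance once, taking the conditioning variable to be the entire vector $Z = V = (V_1, \ldots, V_K)$. This yields \eqref{condensed_var} directly, with $E_{(V_1,\ldots,V_K)} Var(Y_{n+1}\vert {\cal{D}}_n, V_1,\ldots,V_K)$ as the within term and $Var_{(V_1,\ldots,V_K)} E(Y_{n+1}\vert {\cal{D}}_n, V_1,\ldots,V_K)$ as the between term; no iteration is needed.

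Clause (i) I would prove by induction on $K$. The base case $K=1$ is exactly \eqref{lawtotalvariance}, with leading term $E_{V_1}Var(Y_{n+1}\vert {\cal{D}}_n,V_1)$ and trailing term $Var_{V_1}E(Y_{n+1}\vert {\cal{D}}_n,V_1)$. For the inductive step I assume \eqref{Conditional_Var_sum} holds with $K$ replaced by $K-1$ and focus on its leading term $E_{(V_1,\ldots,V_{K-1})}Var(Y_{n+1}\vert {\cal{D}}_n,V_1,\ldots,V_{K-1})$. Applying the law of total variance to the inner quantity $Var(Y_{n+1}\vert {\cal{D}}_n,V_1,\ldots,V_{K-1})$ while further conditioning on $V_K$ splits it into $E_{V_K}Var(Y_{n+1}\vert {\cal{D}}_n,V_1,\ldots,V_K)$ plus $Var_{V_K}E(Y_{n+1}\vert {\cal{D}}_n,V_1,\ldots,V_K)$, both understood conditionally on $V_1,\ldots,V_{K-1}$. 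Taking $E_{(V_1,\ldots,V_{K-1})}$ and using the tower property to merge $E_{(V_1,\ldots,V_{K-1})}E_{V_K}$ into $E_{(V_1,\ldots,V_K)}$ replaces the old leading term by the new leading term $E_{(V_1,\ldots,V_K)}Var(Y_{n+1}\vert {\cal{D}}_n,V_1,\ldots,V_K)$ together with the new summand $E_{(V_1,\ldots,V_{K-1})}Var_{V_K}E(Y_{n+1}\vert {\cal{D}}_n,V_1,\ldots,V_K)$, which is precisely the $k=K$ term of the sum in \eqref{Conditional_Var_sum}. Every remaining term carries over unchanged, completing the step.

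The part that needs the most care, and which I would write out most explicitly, is the bookkeeping of the nested conditioning: keeping straight which variables are being averaged (the outer $E_{(V_1,\ldots,V_{k-1})}$, taken with respect to the appropriate marginal of $W$) versus which are being varied (the inner $Var_{V_k}$, a conditional variance given $V_1,\ldots,V_{k-1}$), and invoking the tower property at each merge. There is no analytic obstacle, since $V$ is finite-dimensional and discrete, so the content is entirely algebraic. As a consistency check I would also verify that summing the second and third lines of \eqref{Conditional_Var_sum} reproduces the between term $Var_{(V_1,\ldots,V_K)}E(Y_{n+1}\vert {\cal{D}}_n,V_1,\ldots,V_K)$ of Clause (ii); this is itself the iterated law of total variance applied to $E(Y_{n+1}\vert {\cal{D}}_n,V_1,\ldots,V_K)$ viewed as a function of $V$, and it confirms that the two clauses are compatible refinements of the same total variance.
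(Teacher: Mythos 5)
Your proof is correct and follows the same route as the paper, which itself disposes of Clause (ii) by a single application of the law of total variance to the vector $V=(V_1,\ldots,V_K)$ and of Clause (i) by iterating that law across the components. Your write-up simply makes explicit the induction and the tower-property bookkeeping that the paper leaves as ``a straightforward iterated application,'' and your final consistency check between the two clauses is a sound (if optional) addition.
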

\begin{proof}
{\it Clause i):}  The proof is a straightforward iterated application of the law of total variance,.

{\it Clause ii):}  This follows from the law of total variance simply treating $V$ as a vector
rather than as the string of its components. $\square$
\end{proof}

We summarize the decomposition in (\ref{Conditional_Var_sum}) using what we call 
``P-ANOVA'', or \textit{predictive analysis of variance}. In \Cref{vartable}, each 
row corresponds to a different source of variability associated with the factors in $V$.

\begin{table}[h]
\caption{\textbf{Sources of Posterior Predictive Variation for $K \geq 3$}.  
We have listed the generic terms in our decomposition of the posterior predictive variance
together with their interpretations.  Following the conventions of ANOVA,  we have also listed
the source of the variability.   All terms are conditional  on ${\cal{D}}_n$.}
\label{vartable}

\centering
\begin{tabular}[t]{|c|c|c|}
\hline Source & Interpretation & Variance \\
\hline
$V_{1} $& Between $V_{1}$ variance & $Var_{V_{1}} E(Y_{n+1} \vert  {\cal{D}}_n, V_{1})$ \\
&&\\
$V_{2}$   & Between $V_{2}$ within $V_{1}$  & $E_{V_{1}}Var_{V_{2}}E(Y_{n+1} \vert  {\cal{D}}_n, V_{1}, V_{2})$ \\
&&\\
\vdots &\vdots& \vdots \\
&&\\
$V_{K}$  &  Between $V_{K}$ within $V_{K-1} \ldots V_{1}  $&$E_{V_{1}}\ldots E_{V_{K-1}}Var_{V_{K}}E(Y_{n+1} \vert  {\cal{D}}_n, V_1 ,V_{2}, \ldots, V_{K})$ \\
&&\\
Predictions  &  within $V_{1}  \ldots V_{k}  $&$E_{V_{1}}\ldots E_{V_{K}}Var(Y_{n+1} \vert  {\cal{D}}_n, V_1 ,V_{2}, \ldots, V_{K})$ \\
\hline
&&\\
Total  &  Posterior predictive variance & $Var(Y_{n+1} \vert  {\cal{D}}_n)$\\
\hline
\end{tabular}
\end{table}%

If we wish to use a frequentist model averaging procedure, rather than a Bayesian, 
we get a similar decomposition, but the variance is a general function of the 
data rather than conditional on the data.   One such result 
is the following.

\begin{proposition} 
\label{General_pred_variance_prop_stacking}
(Stacking Variance)
We have the following two expressions for the stacking predictive variance.\\
{\it Clause (i):}
For $K=1$, the stacking predictive variance for $Y_{n+1}$ is
$$
\hbox{Var}(Y_{n+1})( {\cal{D}}_n, {\mathcal{V}}^{K}) = 
E_{V_1}(\hbox{Var}(Y_{n+1}\vert V_1))( {\cal{D}}_n) + \hbox{Var}_{V_1}E(Y_{n+1}\vert V_1)( {\cal{D}}_n)
$$
and for $K\geq 2$, the stacking predictive variance for $Y_{n+1}$ as function of 
the $K$ factors defining our predictive scheme is given by
\begin{align}
\label{Conditional_Var_sum_stack}
\nonumber 
Var(Y_{n+1})( {\cal{D}}_n,  {\mathcal{V}}^{K}
) 
& = E_{(V_1,\ldots, V_K)} Var(Y_{n+1} \vert  V_1, \ldots, V_K) ({\cal{D}}_n) \\ \nonumber 
& + \sum_{k=2}^{K} E_{(V_{1}, \ldots, V_{k-1} )} Var_{V_k}E(Y_{n+1} \vert  V_1, \ldots, V_k) ({\cal{D}}_n) \\
& + Var_{V_1}E(Y_{n+1} \vert V_1)( {\cal{D}}_n),
\end{align}
where the distribution of $V = (V_1,\ldots, V_K)$ is defined by the stacking weights.
\\
{\it Clause (ii):}
For any $K$, the stacking predictive variance 
$Var(Y_{n+1})(  {\cal{D}}_n, {\mathcal{V}}^{K})$ can be condensed into a two term decomposition:
\begin{align}
\label{condensed_var_stack}
\nonumber Var(Y_{n+1} )(  {\cal{D}}_n, {\mathcal{V}}^{k}) 
& = E_{(V_1,\ldots, V_K)} Var(Y_{n+1} \vert  V_1, \ldots, V_K)({\cal{D}}_n) \\ 
& + Var_{(V_1,\ldots, V_K)}E(Y_{n+1} \vert  V_1, \ldots, V_K)({\cal{D}}_n).
\end{align}
 \end{proposition}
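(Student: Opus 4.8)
The plan is to reduce the entire statement to a single structural fact: the stacking weights define a genuine probability mass function over the levels of $V = (V_1, \ldots, V_K)$, and once this is granted the argument is identical to that of \Cref{General_pred_variance_prop}. Concretely, I would represent the stacking mixture as the marginal law of $Y_{n+1}$ in a joint law for the pair $(V, Y_{n+1})$, in which $V$ has marginal given by the stacking weights and, for each combination of levels $v = (v_1, \ldots, v_K)$, the conditional law of $Y_{n+1}$ given $V = v$ is the corresponding component predictive $p(Y_{n+1} \vert V_1, \ldots, V_K)(\mathcal{D}_n)$. Since $Var(Y_{n+1})(\mathcal{D}_n, \mathcal{V}^{K})$ is by definition the variance of $Y_{n+1}$ under this mixture, every decomposition we want becomes an instance of the law of total variance \eqref{lawtotalvariance} applied to this joint law.

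First I would verify the probability structure. Because we impose both the positivity and the sum-to-one constraints when solving the quadratic program for the stacking weights, each family of weights is nonnegative and sums to one, hence is a legitimate mass function; the nesting identity $w(\{X\}_j) = \sum_{i=1}^{q} w(m_i) w(\{X\}_j \vert m_i)$ shows the joint weights on $(V_1, V_2)$ are mutually consistent, and the same marginalization extends to arbitrary $K$. The one notational point to keep straight is that $\mathcal{D}_n$ enters as a fixed argument parametrizing both the weights and the component predictives, which is why it appears in parentheses rather than after a conditioning bar: all expectations and variances are taken over the stacking-induced law of $V$ and the component predictive laws only, with $\mathcal{D}_n$ held fixed throughout.

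With the joint law in hand, the computations proceed exactly as in the Bayesian case. For Clause (i) with $K = 1$, a single application of \eqref{lawtotalvariance}, conditioning on $V_1$, yields the stated two-term expression. For $K \geq 2$, I would apply the law of total variance iteratively: conditioning on $V_1$ produces the $Var_{V_1} E(Y_{n+1} \vert V_1)(\mathcal{D}_n)$ term together with the residual $E_{V_1} Var(Y_{n+1} \vert V_1)(\mathcal{D}_n)$; expanding that residual by conditioning on $V_2$ peels off the $k = 2$ summand, and continuing down to $V_K$ generates each ``between $V_k$ within $V_1, \ldots, V_{k-1}$'' term while leaving the fully averaged term $E_{(V_1, \ldots, V_K)} Var(Y_{n+1} \vert V_1, \ldots, V_K)(\mathcal{D}_n)$. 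This reproduces \eqref{Conditional_Var_sum_stack}. For Clause (ii), I would instead apply the law of total variance a single time, treating $V = (V_1, \ldots, V_K)$ as one vector-valued random variable rather than as the string of its components, which gives the condensed form \eqref{condensed_var_stack}.

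The main obstacle is conceptual rather than computational: the whole content of the proposition rests on legitimizing the treatment of the frequentist stacking weights as a probability distribution on $V$. Once that identification is made there is nothing to distinguish the stacking algebra from the BMA algebra, and the iterated law of total variance does all the remaining work. I therefore expect the proof to be short, the only care required being the bookkeeping of the parenthetical $(\mathcal{D}_n)$ dependence so that no expectation is inadvertently taken over the data.
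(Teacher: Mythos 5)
Your proposal is correct and matches the paper's approach: the paper gives no separate proof for the stacking version, relying on the same iterated law of total variance used for Proposition~\ref{General_pred_variance_prop} (Clause (i)) and a single application with $V$ treated as a vector (Clause (ii)), with the stacking weights playing the role of the posterior probabilities. Your additional care in verifying that the positivity and sum-to-one constraints make the stacking weights a legitimate probability mass function on $V$ is exactly the one point the paper leaves implicit in the phrase ``the distribution of $V$ is defined by the stacking weights.''
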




Using stacking -- or any other model averaging procedure -- in place of BMA leads to
a table analogous to \Cref{vartable}. 

\subsection{Analogy to Cochran's Theorem}
\label{analog}

Cochran's theorem is used in standard ANOVA problems to 
identify hypothesis tests that determine whether a factor or its levels
should be dropped as having little effect on the observed variability.
One version of this central result is given in \cite{Rao:1973}. Informally,
the theorem states that,  under various regularity conditions,  the corrected sum of
squares from an ANOVA problem can be written as a sum of independent
quadratic forms each of which is distributed as a $\chi^2$ random variable
with a degrees of freedom specified by the statement of the problem.
Equivalently, the sum of squares ``$Y^TY$'' can be written as a sum of
scaled $\chi^2_1$ random variables, where the 
scaling constants are eigenvalues from the corresponding quadratic form. 
Here we a present a predictive analog of Cochran's Theorem that can be 
used to determine if the means within a $V_k$'s with positive posterior
probabilities (or stacking weights) are different enough that they
contribute substantially to the posterior (or stacking) predictive variance.   
Being predictive, our results are fundamentally different from
\cite{Gust:Clarke:2004} who gave an ``ANOVA'' like decomposition 
of the posterior variance 
of a parameter in terms of model components.

\subsubsection{The case $K=2$}
\label{K=2}

As an illustration of how our variance decomposition resembles Cochran's Theorem,
we explicitly convert the terms in a three term decomposition to a convex
combination of quadratic forms.  Consistent with the notation of \cite{Draper:1995},
we write $s_i$ to represent `scenarios' $i=1, \ldots , I$ and $m_{ij}$ to represent
models within scenarios, $j=1, \ldots , J$.  In our notation, the $s_i$'s correspond to
the values of $V_1$ andthe  $m_{ij}$'s correspond to values of $V_2$ nested within $V_1$.
Now, Prop.  \ref{General_pred_variance_prop} gives
\begin{align} 
\nonumber 
Var(Y_{n+1} \vert  {\cal{D}}_n) &=  E_{V_1}E_{V_2}Var(Y_{n+1} \vert  {\cal{D}}_n, V_1, V_2) + E_{V_1}Var_{V_2}E(Y_{n+1} \vert  {\cal{D}}_n, V_1, V_2) \\
   & + Var_{V_1}E(Y_{n+1} \vert  {\cal{D}}_n, V_1) \label{two_level_var} \\
\nonumber &=\sum_{i=1}^I p(s_i \vert  {\cal{D}}_n ) \sum_{j=1}^J p( m_{ij} \vert  {\cal{D}}_n, s_i )Var(Y_{n+1} \vert  {\cal{D}}_n , s_i , m_{ij})  \\
\nonumber &+\sum_{i=1}^I p(s_i \vert  {\cal{D}}_n ) \sum_{j=1}^J p( m_{ij} \vert  {\cal{D}}_n, s_i )\left[ E(Y_{n+1} \vert  {\cal{D}}_n , m_{ij}, s_i) - E( Y_{n+1} \vert  {\cal{D}}_n, s_i) \right] ^2 \\
&+\sum_{i=1}^I p(s_i \vert  {\cal{D}}_n ) \left[ E(Y_{n+1} \vert  {\cal{D}}_n , s_i) - E( Y_{n+1} \vert  {\cal{D}}_n) \right] ^2. 
\label{two_level_var_expand}
 \end{align}

For ease of notation, let
\begin{itemize}
\item $p(s_i \vert  {\cal{D}}_n) = \xi_i$
\item $p(m_{ij} \vert  {\cal{D}}_n, s_i) = \omega_{ij}$
\item $E( Y_{n+1} \vert  {\cal{D}}_n) = \bar{y}$
\item $E(Y_{n+1} \vert  {\cal{D}}_n , s_i) = \bar{y}_{i\cdot}$
\item $E(Y_{n+1} \vert  {\cal{D}}_n , m_{ij}, s_i) = \hat{y}_{ij}$.
\end{itemize}
Now we can restate \eqref{two_level_var_expand} as
\begin{align}
\label{T1} Var(Y_{n+1} \vert  {\cal{D}}_n) &= \sum_{i=1}^I \xi_i \sum_{j=1}^J \omega_{ij} Var(Y_{n+1} \vert  {\cal{D}}_n, m_{ij}, s_i)\\
\label{T2} &+\sum_{i=1}^I \xi_i \sum_{j=1}^J \omega_{ij} \left(\hat{y}_{ij} - \bar{y}_{i\cdot} \right) ^2 \\
 \label{T3}&+ \sum_{i=1}^I \xi_i\left( \bar{y}_{i\cdot} - \bar{y} \right) ^2.
\end{align}

Our strategy is to express  each term in $Var(Y_{n+1} \vert  {\cal{D}}_n)$ in vector 
notation so we can recognize quadratic forms. 
First, we see that \eqref{T1} is an expected quadratic form, i.e.
\begin{eqnarray}
\sum_{i=1}^I \xi_i \sum_{j=1}^J \omega_{ij} Var(Y_{n+1} \vert  {\cal{D}}_n, m_{ij}, s_i) = \sum_{i=1}^I \xi_i \sum_{j=1}^J \omega_{ij} E((Y_{n+1} -\hat{y}_{ij})^2 \vert{\cal{D}}_n, m_{ij}, s_i ).
\label{T1Q}
\end{eqnarray}

For \eqref{T2}, write $W_i$ for the column vector 
$W_i = (\sqrt{\omega_{i1}}, \ldots \sqrt{\omega_{iJ}})'$, and write 
$\hat{Y}_i$ for the column vector $\hat{Y}_i = (\hat{y}_{i1} - \bar{y}_{i\cdot}, \ldots, \hat{y}_{iJ} - \bar{y}_{i\cdot})'$.   Now  \eqref{T2} is
\begin{align}
\nonumber \sum_{i=1}^I \xi_i \sum_{j=1}^J \omega_{ij} \left(\hat{y}_{ij} - \bar{y}_{i\cdot} \right) ^2   &=\sum_{i=1}^I \xi_i W'_i \hat{Y}_i\hat{Y}'_iW_i \\
  &=\sum_{i=1}^I \xi_i \hat{Y}'_iW_i W'_i \hat{Y}_i.
\label{T2Q}
\end{align}

Similarly, for term \eqref{T3}, write $S$ for the column vector 
$S=(\sqrt{\xi_1}, \ldots, \sqrt{\xi_I})'$
and $\bar{Y} = (\bar{y}_{1\cdot}-\bar{y}, \ldots,  \bar{y}_{I\cdot}-\bar{y})'$. 
Then we have that \eqref{T3} is
\begin{align}
\nonumber \sum_{i=1}^I \xi_i \left(\bar{y}_{i\cdot}  - \bar{y}_{\cdot \cdot} \right) ^2   &=S' \bar{Y} \bar{Y}'S \\
  &= \bar{Y}'SS'\bar{Y}.
\label{T3Q}
\end{align}
So, using \eqref{T1Q}, \eqref{T2Q}, and \eqref{T3Q}, 
we can rewrite \eqref{two_level_var} as
\begin{align}
    \label{Quad_T1} 
Var(Y_{n+1} \vert  {\cal{D}}_n) &=  \sum_{i=1}^I \xi_i \sum_{j=1}^J \omega_{ij} E((Y_{n+1} - \hat{y}_{ij})^2 \vert {\cal{D}}_n, m_{ij}, s_i )\\
  \label{Quad_T2}&+\sum_{i=1}^I \xi_i \hat{Y}'_iW_i W'_i \hat{Y}_i  \\
 \label{Quad_T3}& + \bar{Y}'SS'\bar{Y}.
\end{align}
Now we see each term in the posterior predictive variance is a quadratic form, i.e., a homogeneous polynomial of
order two, even if the terms in \eqref{Quad_T1}
are (trivial) quadratic forms of dimension one.   

To see how the distributional aspects of \eqref{Quad_T1}, \eqref{Quad_T2}, and 
\eqref{Quad_T3} parallel the distributional statements in Cochran's Theorem,
we proceed as follows.   Note that regarding ${\cal{D}}_n$ as a random variable 
rather than as observed data
means that all terms in the decomposition can also be regarded as random variables.
Next, assume all data are normal.   Now,
\begin{eqnarray}
\nonumber Var(Y_{n+1} \vert  {\cal{D}}_n) 
-  \sum_{i=1}^I \xi_i \sum_{j=1}^J \omega_{ij} E((Y_{n+1} 
 - \hat{y}_{ij})^2 \vert {\cal{D}}_n, m_{ij}, s_i )
 = \sum_{i=1}^I \xi_i \hat{Y}'_iW_i W'_i \hat{Y}_i  
 + \bar{Y}'SS'\bar{Y} \\
\label{Cochrandist1}
\end{eqnarray}
in which each term has a distribution.
We first deal with the two terms on the right in \eqref{Cochrandist1}.

To begin, we recall Theorem 2.1 in \cite{Box:1954} that
generalizes Cochran's theorem for the distribution for 
quadratic forms.  Namely, if $X\sim N(0, V)$, with $V$ a $p \times p$ covariance matrix.  
Then if $Q= X^TMX$ is any real quadratic form of rank $r \leq p$, $Q$ is distributed like a quantity
\begin{equation}
\label{Box_dist}
\sum_{j=1}^r \lambda_j \chi^2_1
\end{equation}
with $r \leq p$  and $\lambda_i$ the $i^{th}$ eigenvalue of $VM$.

Now, look at the first term on the right, and let  $A_{i} =  W_i W_i'$. We know $A_{i}$ is a $J \times J$, symmetric, and semi-positive definite because (\ref{Quad_T2}) is a variance between values $V_1$ within $V_2$ and by definition variances are positive.   

Next, consider the second term on the right and let $B= SS'$  
which is $I \times I$, symmetric and semi-positive definite by definition of variance. 
Further suppose $\bar{Y} \sim N(0, \Sigma^*)$  and  $\sqrt{\xi_i}\hat{Y}_i \sim N(0, \Sigma_i)$.  

Now, since both terms on the right in \eqref{Cochrandist1} are quadratic forms in a normal random vector, we can apply
Theorem 2.1 in \cite{Box:1954} to each of them.   So,
\eqref{Cochrandist1} gives 
\begin{equation}
Var(Y_{n+1} \vert  {\cal{D}}_n) - \sum_{i=1}^I \xi_i \sum_{j=1}^J \omega_{ij} E((Y_{n+1} -\hat{y}_{ij})^2 \vert{\cal{D}}_n, m_{ij}, s_i ) \sim  \sum^I_{i=1} \xi_i \sum^J_{j=1} \lambda_{ij} \chi^2_1 +  \sum^I_{i=1} \lambda_i\chi^2_1
\label{Cochrandist2}
\end{equation}
where $\lambda_i$ is the $i^{th}$ eigenvalue of $B\Sigma^*$ and $\lambda_{ij}$ is the
$j$-th eigenvalues of $A_i \Sigma_i$.  
That is, the two terms on the rightof \eqref{Cochrandisst1}
 are convex and weighted sums, respectively,
of $\chi^2_1$ random variables.

The second term on the left is the expectation of a $\chi^2_1$ random variable.
To see this,  suppose  
$(Y_{n+1} - \hat{y}_{ij} \vert {\cal{D}}_n, m_{ij}, s_i) \sim N(\mu, \sigma^2)$ so that
$((Y_{n+1} - \hat{y}_{ij})^2 \vert {\cal{D}}_n, m_{ij}, s_i) \sim \sigma^2 \chi^2_1(\mu^2)$.  This gives
\begin{eqnarray}
E((Y_{n+1} - \hat{y}_{ij})^2 \vert {\cal{D}}_n, m_{ij}, s_i ) = \mu^2 + \sigma^2
\label{extraterm}
\nonumber
\end{eqnarray}
where  $\mu$ and $\sigma$ depend on $({\cal{D}}_n, m_{ij}, s_i )$.   It
is difficult to determine the distribution of \eqref{extraterm} explicitly but because we are
taking a convex combination of terms like it, computations
suggest it is approximately normal.

Since all three terms in \eqref{two_level_var_expand} are variances and hence
corrected for their means, we regard \eqref{Quad_T1} is a new term that arises from
trying to derive a representation of $Var(Y_{n+1} \vert  {\cal{D}}_n)$ as
an expansion in the form of Cochran's Theorem.  
to complete our analogy, recall
Cochran's Theorem gives as many terms as there are factors plus a residual term.
We get $\dim(V)$ terms, i.e.,  the number of factors, plus an
extra term,
\eqref{Quad_T1}, the predictive analog of the residual term.

If desired, we can approximate distributions of the right hand terms in \eqref{Cochrandist2}
more compactly
by using other results from \cite{Box:1954}. 
Theorem 2.2 gives the formulas for the $i^{th}$ cumulant of \eqref{Box_dist} as
$$
Q_i = 2^{i-1}(i-1)! \sum^{r}_{j=1} \lambda_j.
$$

Using this, we can approximate \eqref{Box_dist} by $g\chi^2(h)$ where
$$
g = \frac{1}{2}\frac{Q_1^2}{Q_2} = \frac{ \sum \lambda^2_j}{\sum \nu_j \lambda_j}
$$
and
$$
h = \frac{2 Q_1^2}{Q_2} = \frac{(\sum \lambda_j)^2}{\sum  \lambda_j^2}.
$$
Box gives this approximation in part because it has the same first two moments
as \eqref{Box_dist}.  Box also notes that when all $\lambda_j$ are equal, 
the degrees of freedom, $h$, is smaller than appropriate.  

Using this we can approximate
$
\bar{Y}' B \bar{Y} =\bar{Y}' SS'\bar{Y} 
$
by
 \begin{equation}
\label{T3_approx}
  g\chi^2_h = \frac{ \sum \lambda^2_i}{\sum \lambda_i} \chi^2\left(\frac{(\sum \lambda_i)^2}{\sum \lambda_i^2} \right).
\end{equation}
Also, we can approximate
$$
\sqrt{\xi_i }\hat{Y}'_i A_i \sqrt{\xi_i }\hat{Y}_i = \sqrt{\xi_i }\hat{Y}'_i  W_i W'_i  \sqrt{\xi_i }\hat{Y}_i
$$
by
$$
 g_i \chi^2_{h_i} =\frac{ \sum_j \lambda^2_{ij}}{\sum_j \lambda_{ij}} \chi^2\left(\frac{(\sum_j \lambda_{ij})^2}{\sum_j \lambda_{ij}^2} \right).
 $$
Hence,  we have the approximate distribution
\begin{equation}
\nonumber Var(Y_{n+1} \vert  {\cal{D}}_n) -  \sum_{i=1}^I \xi_i \sum_{j=1}^J \omega_{ij} E((Y_{n+1} -\hat{y}_{ij})^2 \vert{\cal{D}}_n, m_{ij}, s_i )\overset{approx}{\sim} g_i \chi^2_{h_i} +  g\chi^2_h.
\end{equation}

In classical Cochran's Theorem settings, the $\chi^2$ distributional results are used 
to form $F$-tests.  Here,this is not readily feasible because the quadratic forms are
not in general independent, the matrices in them are not idempotent, and
 we do not have a definite distribution for the second term on the left in \eqref{Cochrandist1}.
Our point here has been only to show the parallel between the Cochran's Theorem
decomposition and our posterior predictive variance decomposition.
In practice,  instead of $F$ test, our decompsotion leads to bootstrap 
tests that we present in Subsec. \ref{testing}.

\subsubsection{General $K$}
\label{generalK}

Deriving quadratic forms and distributional expressions
for $Var(Y_{n+1} \vert {\cal{D}}_n)$ for general $K$ is similar to the derivation of
\eqref{Cochrandist1} and \eqref{Cochrandist2}, respectively, seen in Subsec.\ref{K=2}.
For the sake of completeness, we state these two results below.

Our first result gives the general expression for the posterior predictive variance
in terms of quadratic forms.  Let 
$$ 
\hat{y}_{v_{i_1},\ldots,v_{i_{k}}} = E\left(Y_{n+1} \vert v_{i_1},\ldots,v_{i_{k}} \right).
$$
We have the following.
\begin{proposition}
\label{pred_cochran}
For a $K$-factor predictive scheme, the posterior predictive
variance can be written as a sum of weighted quadratic forms as follows:
\begin{align}
\nonumber Var(Y_{n+1} \vert  {\cal{D}}_n)
&=  \sum_{i_{1}=1}^{I_{1}} p\left(v_{i_1}\vert {\cal D}_n\right) \ldots  \sum_{i_{K}=1}^{I_{K}} p\left(v_{i_K}\vert {\cal D}_n,v_{i_1},\ldots,v_{i_{K-1}}\right)  E\left(\left(Y_{n+1} - \hat{y}_{v_{i_1},\ldots,v_{i_{K}}} \right)^2 \vert{\cal{D}}_n,v_{i_1} \ldots , v_{i_K} \right) \\ \nonumber
&+   \sum_{i_{1}=1}^{I_{1}} p\left(v_{i_1}\vert {\cal D}_n\right) \ldots  \sum_{i_{K}=1}^{I_{K-1}} p\left(v_{i_{K-1}}\vert {\cal D}_n,v_{i_1},\ldots,v_{i_{K-2}}\right) \hat{Y}'_{K,\ldots,1} A_{K,\ldots,1} \hat{Y}_{K,\ldots,1}\\ \nonumber
&+  \sum_{i_{1}=1}^{I_{1}} p\left(v_{i_1}\vert {\cal D}_n\right)\ldots  \sum_{i_{K-2}=1}^{I_{K-2}} p\left(v_{i_{K-2}}\vert {\cal D}_n,v_{i_1},\ldots,v_{i_{K-3}}\right)\hat{Y}'_{K-1,\ldots,1} A_{K-1,\ldots,1} \hat{Y}_{K-1,\ldots,1}\\ \nonumber
& \hspace{.6 in} \vdots \hspace{.6 in} \vdots  \hspace{.6 in} \vdots \\ \nonumber
&+   \sum_{i_{1}=1}^{I_{1}} p(v_{i_1}\vert {\cal D}_n)  \hat{Y}'_{2,1} A_{2,1} \hat{Y}_{2,1} \\
&+  \hat{Y}'_{1} A_{1}  \hat{Y}_{1},
\label{sum_quad_from_general}
\end{align}
where
\begin{equation}
\label{sum_general_chi_squared}
A_{k,\ldots,1} =  W_{k,\ldots,1} \left(W_{k,\ldots,1} \right)',
\end{equation}
$$
W_{k,\ldots,1}  = \left(\sqrt{ p\left(v_{i_{k}=1}\vert {\cal D}_n,v_{i_1},\ldots,v_{i_{k-1}}\right)} , \ldots ,\sqrt{ p\left(v_{i_{k}=I_k}\vert {\cal D}_n,v_{i_1},\ldots,v_{i_{k-1}}\right)}  \right),
$$
 and $\hat{Y}_{k,\ldots,1} $ is the column vector of mean adjusted predictions for factor $V_k$ conditional on factors $V_{1}, \ldots V_{k-1}$.  That is, we write
$$
\hat{Y}_{k,\ldots,1} = \left(\left(\hat{y}_{v_{i_1}, \ldots, v_{i_k=1}} - E(Y_{n+1} \vert {\cal D}_n, v_{i_1}, \ldots, v_{i_{k-1}})  \right), \ldots, \left(\hat{y}_{v_{i_1}, \ldots, v_{i_k=I_k}}  - E(Y_{n+1} \vert {\cal D}_n, v_{i_1}, \ldots, v_{i_{k-1}} ) \right)\right)'
$$
where $\hat{y}_{v_{i_1}, \ldots, v_{i_k=j}} = E(Y_{n+1} \vert {\cal D}_n, v_{i_1}, \ldots, v_{i_k=j})$.
\end{proposition}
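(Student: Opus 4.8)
The plan is to start from Clause (i) of Proposition~\ref{General_pred_variance_prop}, which already expresses $Var(Y_{n+1}\vert {\cal D}_n)$ as the $(K+1)$-term sum \eqref{Conditional_Var_sum}, and then to rewrite each of those $K+1$ terms as a weighted quadratic form, mirroring the passage from \eqref{two_level_var} to \eqref{Quad_T1}--\eqref{Quad_T3} carried out for $K=2$ in Subsec.~\ref{K=2}. No new probabilistic input is needed beyond the iterated law of total variance already used to establish \eqref{Conditional_Var_sum}; the entire content of Proposition~\ref{pred_cochran} is the algebraic repackaging of those terms into the matrices $A_{k,\ldots,1}$ and the mean-adjusted vectors $\hat{Y}_{k,\ldots,1}$.

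First I would convert the residual term. By the definition of conditional variance, $Var(Y_{n+1}\vert {\cal D}_n, V_1,\ldots,V_K) = E((Y_{n+1}-\hat{y}_{v_{i_1},\ldots,v_{i_K}})^2 \vert {\cal D}_n, v_{i_1},\ldots,v_{i_K})$, and expanding the outer expectation $E_{(V_1,\ldots,V_K)}$ as the nested sum over levels with conditional masses $p(v_{i_1}\vert{\cal D}_n)\cdots p(v_{i_K}\vert {\cal D}_n, v_{i_1},\ldots,v_{i_{K-1}})$ yields the first line of \eqref{sum_quad_from_general}. Next, for each $k$ with $2\le k\le K$, I would expand the between-factor term $E_{(V_1,\ldots,V_{k-1})}Var_{V_k}E(Y_{n+1}\vert{\cal D}_n,V_1,\ldots,V_k)$: the inner variance is the weighted sum of squared mean-adjusted predictions $\sum_{i_k} p(v_{i_k}\vert{\cal D}_n,v_{i_1},\ldots,v_{i_{k-1}})(\hat{y}_{v_{i_1},\ldots,v_{i_k}} - E(Y_{n+1}\vert{\cal D}_n,v_{i_1},\ldots,v_{i_{k-1}}))^2$, which is the quadratic form $\hat{Y}'_{k,\ldots,1}A_{k,\ldots,1}\hat{Y}_{k,\ldots,1}$ with $A_{k,\ldots,1}=W_{k,\ldots,1}W'_{k,\ldots,1}$ from \eqref{sum_general_chi_squared}, exactly as \eqref{T2} becomes \eqref{T2Q}. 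Carrying along the remaining outer expectation $E_{(V_1,\ldots,V_{k-1})}$ as nested conditional sums supplies the probability weights that premultiply this quadratic form. The final term $Var_{V_1}E(Y_{n+1}\vert{\cal D}_n,V_1)$ carries no outer expectation and collapses directly to $\hat{Y}'_1 A_1\hat{Y}_1$, just as \eqref{T3} becomes \eqref{T3Q}.

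The step I expect to be the main obstacle is the index bookkeeping: verifying that the weights premultiplying the $k$-th quadratic form are exactly the nested conditional probabilities $p(v_{i_1}\vert{\cal D}_n)\cdots p(v_{i_{k-1}}\vert {\cal D}_n, v_{i_1},\ldots,v_{i_{k-2}})$ running only through level $k-1$, and that the centering constant inside $\hat{Y}_{k,\ldots,1}$ is the correct lower-order conditional mean $E(Y_{n+1}\vert {\cal D}_n, v_{i_1},\ldots,v_{i_{k-1}})$ rather than the global mean. To control this I would first confirm the pattern on the fully worked $K=2$ case in \eqref{Quad_T1}--\eqref{Quad_T3} and on $K=3$ by hand, and then argue by induction on $K$: peeling off the innermost factor $V_K$ via one application of the law of total variance splits the order-$(K-1)$ residual term $E_{(V_1,\ldots,V_{K-1})}Var(Y_{n+1}\vert{\cal D}_n,V_1,\ldots,V_{K-1})$ into the new residual term $E_{(V_1,\ldots,V_K)}Var(Y_{n+1}\vert{\cal D}_n,V_1,\ldots,V_K)$ plus the new between-$V_K$ term, while the inductive hypothesis handles the remaining $K-1$ quadratic forms unchanged. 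Matching the two new terms to the first two lines of \eqref{sum_quad_from_general} pins down the weight-and-centering pattern, after which the remaining manipulations are the routine vectorizations already illustrated in Subsec.~\ref{K=2}, and the proof closes.
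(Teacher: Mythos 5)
Your proposal matches the paper's own (largely implicit) proof: the paper justifies Proposition \ref{pred_cochran} by exactly the route you describe, namely taking the $(K+1)$-term decomposition from Clause (i) of Proposition \ref{General_pred_variance_prop} and vectorizing each between-factor term and the residual term precisely as in the $K=2$ derivation of Subsec.~\ref{K=2}, with the nested conditional weights through level $k-1$ and the lower-order conditional mean as centering constant, so your inductive bookkeeping adds detail but no new ideas. One caveat you inherit from the paper rather than introduce: the identity $\sum_j \omega_{ij}(\hat{y}_{ij}-\bar{y}_{i\cdot})^2 = \hat{Y}_i'W_iW_i'\hat{Y}_i$ used in passing from \eqref{T2} to \eqref{T2Q} (and hence your identification of the inner variance with $\hat{Y}'_{k,\ldots,1}A_{k,\ldots,1}\hat{Y}_{k,\ldots,1}$) is not literally an identity, since the right side equals $\bigl(\sum_j\sqrt{\omega_{ij}}(\hat{y}_{ij}-\bar{y}_{i\cdot})\bigr)^2$; the weighted sum of squares is $\hat{Y}_i'\diag(\omega_{i1},\ldots,\omega_{iJ})\hat{Y}_i$, i.e., a rank-one quadratic form only after absorbing the square-root weights into the mean-adjusted vector.
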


Note that for the stacking version we replace the posterior probabilities with stacking weights.

Our second result gives the distributions for $K$ of the terms in our expansion
for the posterior predictive variance.  As before, we get sums of $\chi^2_1$ random variables.
We have the following.
\begin{proposition}
\label{weight_chi_sq_prop}
Let  $\hat{Y}_{1} \sim N(0, \Sigma)$  and  $W_{k,\ldots,1} \hat{Y}_{k,\ldots,1} \sim N(0, \Sigma_{k,\ldots,1})$. Then the sum of quadratic forms in (\ref{sum_quad_from_general}) are distributed like a sum of weighted 
$\chi$-squared random variable as follows
\begin{align}
\label{sum_quad_chi_sqaures_general}
\nonumber Var(Y_{n+1} \vert  {\cal{D}}_n)&\sim  \sum_{i_{1}=1}^{I_{1}} p\left(v_{i_1}\vert {\cal D}_n\right) \ldots  \sum_{i_{K}=1}^{I_{K}} p\left(v_{i_K}\vert {\cal D}_n,v_{i_1},\ldots,v_{i_{K-1}}\right)  E((Y_{n+1} -\hat{y}_{v_{i_1},\ldots,v_{i_{K}}})^2 \vert{\cal{D}}_n,v_{i_1},\ldots,v_{i_{K}} )\\
\nonumber & + \sum_{i_{1}=1}^{I_{1}} p\left(v_{i_1}\vert {\cal D}_n\right) \ldots \sum_{i_{K-1}=1}^{I_{K-1}}  p\left(v_{i_{K-1}}\vert {\cal D}_n,v_{i_1},\ldots,v_{i_{K-2}}\right) \sum_{i_{K}=1}^{I_{K}}  \lambda_{K \ldots, 1} \chi^2_1  \\
\nonumber &+ \sum_{i_{1}=1}^{I_{1}} p\left(v_{i_1}\vert {\cal D}_n\right) \ldots \sum_{i_{K-2}=1}^{I_{K-2}}  p\left(v_{i_{K-2}}\vert {\cal D}_n,v_{i_1},\ldots,v_{i_{K-3}}\right)  \sum_{i_{K-1}=1}^{I_{K-1}}  \lambda_{K-1 \ldots, 1} \chi^2_1  \\
\nonumber & \hspace{.6 in} \vdots \hspace{.6 in} \vdots \hspace{.6 in} \vdots  \\
\nonumber &+\sum_{i_{1}=1}^{I_{1}} p\left(v_{i_1}\vert {\cal D}_n\right) \sum_{i_{2} =1}^{I_{2}} \lambda_{2 , 1 } \chi^2_1  \\
 &+\sum_{i_{1}=1}^{I_{1}} \lambda_{1} \chi^2_1
 \end{align}
where $\lambda_{k, \ldots ,1}$ is the 
$k$th eigenvalue of $A_{k,\ldots,1} \Sigma_{k,\ldots,1}$.
\end{proposition}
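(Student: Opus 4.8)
The plan is to reduce the general-$K$ statement to a term-by-term application of Box's Theorem 2.1, exactly as was carried out for $K=2$ in Subsec.~\ref{K=2}, and then to organize the bookkeeping by induction on the number of factors. I would start from the quadratic-form representation of $Var(Y_{n+1}\vert{\cal D}_n)$ already established in Prop.~\ref{pred_cochran}, namely \eqref{sum_quad_from_general}. The essential feature of that representation is that, apart from the first (residual) summand, every term is a posterior-probability-weighted sum of quadratic forms $\hat{Y}'_{k,\ldots,1} A_{k,\ldots,1} \hat{Y}_{k,\ldots,1}$, in which the ``within'' factors $V_1,\ldots,V_{k-1}$ supply the outer weights $p(v_{i_1}\vert{\cal D}_n)\cdots$ and the ``between'' factor $V_k$ supplies the vector $\hat{Y}_{k,\ldots,1}$ of mean-adjusted predictions together with the matrix $A_{k,\ldots,1}=W_{k,\ldots,1}(W_{k,\ldots,1})'$ from \eqref{sum_general_chi_squared}.

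Next I would regard ${\cal D}_n$ as random, so that each $\hat{Y}_{k,\ldots,1}$, and hence each quadratic form, is itself a random variable, and invoke the normality hypotheses of the proposition, $\hat{Y}_{1}\sim N(0,\Sigma)$ and $W_{k,\ldots,1}\hat{Y}_{k,\ldots,1}\sim N(0,\Sigma_{k,\ldots,1})$. With these in hand each form is of the type $X'MX$ with $X$ mean-zero normal and $M=A_{k,\ldots,1}$, so Box's Theorem 2.1 (the representation \eqref{Box_dist}) applies to each one separately and gives $\hat{Y}'_{k,\ldots,1}A_{k,\ldots,1}\hat{Y}_{k,\ldots,1}\sim\sum_{i_k}\lambda_{k,\ldots,1}\chi^2_1$, with $\lambda_{k,\ldots,1}$ ranging over the eigenvalues of $A_{k,\ldots,1}\Sigma_{k,\ldots,1}$. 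Carrying the untouched outer posterior-probability weights along, and leaving the first term unchanged exactly as the analogous expected-noncentral-$\chi^2$ term was left in \eqref{Cochrandist2}, reproduces line by line the claimed expansion \eqref{sum_quad_chi_sqaures_general}. To make this airtight for arbitrary $K$ I would run an induction: the base case $K=2$ is precisely the derivation of \eqref{Cochrandist1}--\eqref{Cochrandist2}, and the inductive step peels off one additional nesting level, which introduces exactly one new quadratic-form term carrying one new summation index and is handled identically. The purely routine part is the eigenvalue and index bookkeeping across the nested conditioning levels, which I would suppress.

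The main obstacle is conceptual rather than computational. Box's theorem only delivers the \emph{marginal} distribution of each quadratic form, so the conclusion is a representation of the distribution of each individual term, not a joint statement; I would therefore be careful not to claim independence of the summands or to collapse them into a single $\chi^2$ or $F$ statistic, since the matrices $A_{k,\ldots,1}$ are neither idempotent nor mutually orthogonal. This is exactly the gap already flagged after \eqref{Cochrandist2}, and it is what forces the bootstrap tests of Subsec.~\ref{testing} in place of exact $F$-tests. A second soft spot is that the normality of the $\hat{Y}_{k,\ldots,1}$ is an imposed hypothesis, justifiable only asymptotically once ${\cal D}_n$ is treated as random, so the proposition is best read as a distributional approximation whose force rests on that assumption holding.
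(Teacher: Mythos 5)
Your proposal is correct and matches the paper's own (implicit) argument: the paper offers no separate proof for general $K$, stating only that the derivation is ``similar to'' the $K=2$ case of \eqref{Cochrandist1}--\eqref{Cochrandist2}, which is precisely the term-by-term application of Box's Theorem 2.1 to the quadratic forms of Prop.~\ref{pred_cochran} that you carry out. Your added caveats --- that the conclusion is only a collection of marginal distributional statements with no independence claim, and that normality of the $\hat{Y}_{k,\ldots,1}$ is an imposed hypothesis --- are consistent with the paper's own remarks following \eqref{Cochrandist2}.
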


\subsection{Testing}
\label{testing}

The $\chi$-squared distributions derived
at the end of Subsec. \ref{K=2} or motivated by Prop. \ref{weight_chi_sq_prop}
are analogous (apart from dependence and normality) to the 
distributional result from Cochrane's theorem.  In the ANOVA context, it is common
to test the equality of levels of a factor.  Here, 
the corresponding null hypothesis would be the equality of
expectations of the predictive distributions within a factor or the posterior weight
being close to one for a single level within a factor.  Here,  we rephrase these tests 
as a way to determine the relative importance
of terms in our decomposition.   

Specifically, we want to test whether a term in the variance decomposition is
a substantial fraction of the overall variance.  Consider the case $K=1$ that gives
a two-term decomposition for $Var(Y_{n+1} \vert {\cal{D}}_n)$.
Now, we want to test hypotheses of the form
$$
H_0: E\left( \frac{Var_{V_1}(Y_{n+1}\vert {\cal{D}}_n, V_1)}{Var(Y_{n+1} \vert  {\cal{D}}_n) }\right) \geq \tau
$$
$$
H_1: E\left( \frac{Var_{V_1}(Y_{n+1}\vert {\cal{D}}_n, V_1)}{Var(Y_{n+1} \vert  {\cal{D}}_n )}\right) < \tau.
$$
for some pre-selected value of $\tau >0$.
Since we do not have a likelihood for the argument of the expectation in $H_0$,
we are led to a nonparametric test based on bootstrapping.  

Assuming that the data is representative of of the DG,
we use bootstrapping on the argument of the expectation in $H_0$.
The result is a data set of the form
$$
Z_b =  \frac{Var_{V_1}E(Y_{n+1}\vert {\cal{D}}^b_n, V_1)}{Var(Y_{n+1} \vert  {\cal{D}}^b_n)},
$$
for $b=1, \ldots, B$ that can be regarded as
representative of
$\frac{Var_{V_1}(Y_{n+1}\vert {\cal{D}}_n, V_1)}{Var(Y_{n+1} \vert  {\cal{D}}_n) }$
as a random variable.   Writing $\bar{z}$ and $SE(\bar{z})$ for the mean and its
standard error for the $Z_b$'s we form
$$
t = \frac{\bar{z} -\tau}{SE(\bar{z})}.
$$
We use $\tau$ in this expression because it corresponds to seeking the uniformly most
powerful test for $H_0$.  Note that $\bar{z}$ is (mild) abuse of notation.
In fact, we should write the $Z_b$'s with `hats' over the variances and expectations since
we are bootstrapping.  We see this as a point to bear in mind but do not wish to clutter
the notation.

Let $J > B$.    In a second layer of bootstrapping, draw $J$ samples of size
$B$ from $z_1,\ldots, z_B$, with replacement.     Denote these by
$z^\prime_1, \dots, z^\prime_J$ where each $z^\prime_j$ has $B$ entries.
To get a distribution for $T=t$ as a random variable
under the null, we generate the vectors
$$
\tilde{z^\prime}_j = z^\prime_j - (\bar{z^\prime}_j -\tau) {\bf 1}_B
= z^\prime_j - \left(\frac{1}{B}\sum^B_{b=1}z_{j,b}^\prime -\tau \right) {\bf 1}_B
$$
where ${\bf 1}_B = (1, \ldots, 1)$ is $B$-dimensional.  Now, we have $J$ different samples for
which the mean is $\tau$.  From the samples corrected by their means and $\tau$ so they satisfy the null, we form the $t$-statistics
$$
\tilde{t}_j = \frac{\bar{\tilde{z}}^\prime_j-\tau}{SE(\bar{\tilde{z}}^\prime_j)}
$$
for $j=1, \ldots, J$
and calculate the estimated achieved significance level,
$$
\widehat{ASL} = \frac{1}{J}\sum I(\tilde{t}_j  \leq t ).
$$
When the $\widehat{ASL}$ is small, we reject $H_0$ and this tells us that
$Var_{V_1}E(Y_{n+1}\vert {\cal{D}}_n, V_1)\approx 0$ suggesting that  
$E(Y_{n+1}\vert {\cal{D}}_n, V_1)$ is constant in $V_1$. Therefore, omitting
this term in forming the PI for $Y_{n+1}$ does not affect the width.
Here, when we do this testing, we default to a threshold of .05 for the ASL
for convenience.

This bootstrapping approach allows us to move beyond the assumption
that the predictions follow a normal distribution as used in the discussion at the end of 
Subsec. \ref{K=2} and in Prop. \ref{weight_chi_sq_prop}.

\section{Revisting Draper (1995)}
\label{challenger}

Here we apply our techniques to two examples given in \cite{Draper:1995} and one further
example that his second example motivates.
The first example involves predicting the price of oil; the second example
involves predicting the chance of failure of O-rings in a space shuttle at a new temperature.
Our third example for this section is an extension of the latter data type with
a more difficult variable selection problem.  
Draper's main point was when making predictions, we need to consider the uncertainty of the `structural' choices we make or we can be lead to bad decisions.    Here, we have
formalized Draper's concept of structural choices in our conditioning variable $V$.
One danger in poor structural choices is that a PI may be found that is
unrealistically small leading to over-confidence.

By using the testing procedure in Subsec. \ref{testing},  we are able to determine
which terms in the Cochran-like decomposition (see Clause (i)
in Prop. \ref{General_pred_variance_prop})
can be ignored.
That is, our test is able to determine if a structural choice should or should
not be included in the uncertainty analysis of the predictive distribution.

\subsection{Oil Prices}

In the oil prices example in \cite{Draper:1995} there are two structural components to
the modeling namely,  12 economic scenarios with 10 economic models nested inside them.
These components represent 120 models and hence
introduce model uncertainty that must be quantified to generate
good PI's.  

In Draper's analysis each model was used given the parameters of each scenario.
This corresponds to $K =2$ and a three term posterior predictive variance decomposition.
Let $s_i$ denote scenario $i$ and $m_{ij}$ be model $j$ within scenario $i$.
Write $s_i \in S$ and $m_{ij} \in M_i \subset M$ where $M_i$ is the set of models
for scenario $i$ and $M$ is the union of the $M_i$'s.   Now, we have
\begin{align}  
\nonumber Var(Y_{n+1} \vert  {\cal{D}}_n) (S,M)  &=E_S E_MVar(Y_{n+1} \vert  {\cal{D}}_n , S , M)  \\
\nonumber &+ E_S Var_M(E(Y_{n+1} \vert  {\cal{D}}_n , S, M) )\\
&+ Var_S(E(Y_{n+1} \vert  {\cal{D}}_n , S)).
\label{DraperOil}
 \end{align}
The corresponding decomposition given by Draper is
$\nonumber Var(Y_{n+1} \vert  {\cal{D}}_n) = 178 + 363 + 354 = 895$.  We
cannot recompute this example because neither the data nor the details on the scenarios or
models are available to us.   However, in this case it is seen that
the between-scenarios variance, i.e., term \eqref{DraperOil}, contributes about 40\% to
the posterior predictive variance.   The second term on the right,  the between-models
within scenarios variance., is also about 40\%  The variance attributable to the predictions
within models and scenarios is about 20\%.    (See \Cref{vartable} for the definition of terms.)
Thus all the three terms must
be used when forming PI's.  We surmise therefore that if we had the original data
and could therefore perform the desired hypothesis tests, we would not reject
any of the null hypotheses.

\subsection{Challenger Disaster}
\label{disaster}

Making the decision
to launch the space shuttle at an ambient temperature at which the
various components had not been tested ended up being catastrophic
-- and could have been avoided had a proper uncertainty analysis had been done.  
Statistically, the error of the decision makers was to choose a single model from a
model list rather than incorporating all sources of predictive uncertainty into their analysis.
The goal of this example originally was to show that a correct analysis
of the various sources of uncertainty would have led to a PI for $p_{t=31}$,
the probability of
an O-ring failure (at $31^\circ$) of $(.33,1]$ i.e., too high for a launch to
be safe.
Our goal in re-analyzing Draper's example is to identify which sources of
uncertainty can be neglected.    

We have 23 observations of the number of damaged O-rings ranging from zero to
six (because each shuttle had six O-rings).    Each observation also has a temperature $t$
and a `leak-check' pressure $s$.   Following Draper's analysis we also use $t^2$
as an explanatory variable.  Thus we have 24 vectors, each of length four.

We assume the number of damaged O-rings follows a $Binomial(6, p)$ distribution
where $p$ is a function of the explanatory variables via one of three link functions,
logit, $c\log\log$, and probit.  Thus, we have structural uncertainty in the choice
of variables and in the choice of link function.  In our notation, we set
$V_1 = \{L, C, P \}$ for the choice of
link function, logit, $c\log\log$, and probit respectively. Also let
$V_2 = \{t, t^2, s, \text{no effect}\}$ where no effect means an intercept-only model.  
The 24 models are summarized in \Cref{Tab_challenger2}.

\begin{table}[ht]
\caption{\textbf{List of models for the Cahllenger disaster data:}  This table lists
all 24 models under consideration broken down by their structural choices -- link functions
and explanatory variables.}
\label{Tab_challenger2}
\centering
 \resizebox{\textwidth}{!}{  
\begin{tabular}{ccccccccccccccccccc}
  \hline
 ${\cal{V}}^{(2)}$ &  $m_1$& $m_2$ & $m_3$ & $m_4$ & $m_5$ & $m_6$ & $m_7$& $m_8$ & $m_{9}$ & $m_{10}$ & $m_{11}$  & $m_{12}$ & $m_{13}$\\
  \hline
$V_1$   &  L & L & L & L & L  & L & L&L   &  C & C & C & C & C    \\
$V_2$  & $t$ &  $t^2$& $s$ & $t, t^2$ & $t, s$ & $t^2, s$ & $t, t^2, s$   & no effect  & $t$ &  $t^2$& $s$ & $t, t^2$ & $t, s$ \\
   \hline

  \hline
 ${\cal{V}}^{(2)}$  & $m_{14}$ & $m_{15}$ & $m_{16}$& $m_{17}$& $m_{18}$ & $m_{19}$ & $m_{20}$ & $m_{21}$ & $m_{22}$& $m_{23}$& $m_{24}$  \\
  \hline
$V_1$  & C&C & C  &  P & P & P & P & P&P  & P  &P   \\
$V_2$  & $t^2, s$ & $t, t^2, s$   & no effect & $t$ &  $t^2$& $s$ & $t, t^2$ & $t, s$& $t^2, s$  & $t, t^2, s$   & no effect  \\
   \hline
\end{tabular}
}
\end{table}

In fact, Draper did not consider all of these models.   Essentially he put zero
prior probability on all models except for $m_1, m_4, m_5,m_7,m_8$, and $m_{15}$.
Accordingly,  he only considered the set
$$
 {\cal{M}} = \{m_1, m_4, m_5, m_7, m_8, m_{15} \}
$$
with a uniform prior.
Draper then gave a table of posterior quantities for the structural choices, and a
posterior predictive variance decomposition for within-structure and between-structure
variances as
\begin{eqnarray}
Var(p_{t=31}\vert  {\cal{D}}_{23})= Var_{within} + Var_{between} = 0.0338 + 0.0135 = 0.0473.
\label{twotermDraper}
\end{eqnarray}
That is, even though there were two structural choices, Draper used a decomposition
appropriate for one.  This corresponds to using our result Clause (ii) in
Prop. \ref{General_pred_variance_prop}.  Draper's conclusion was that
$.0135/.0473 \approx 28.5\%$ so the uncertainty represented by the second term
in \eqref{twotermDraper} could not be neglected.

Here we extend Draper's analysis and confirm that structural uncertainty should not be ignored.
For our implementation, we use the full set of 24 models and do not employ the same
approximations.   Then, we use the {\sf{BMA}} package in R to get
the posterior distributions of the parameters of the models and the posterior weights
for $V_2$.  We also use the ${\sf rjmcmc}$ package to get the posterior weights
for $V_1$.
We note in passing that the resulting posterior distributions were qualitatively similar to
Draper's approximate posteriors.

Considering all sources of uncertainty yields a posterior predictive variance decomposition of
\begin{align}
\label{var_challenger_3term}
\nonumber Var(p_{t=31}\vert  {\cal{D}}_{23})&=  E_{V_1}E_{V_2}Var(p_{t=31} \vert  {\cal{D}}_{23}, V_1, V_2) + E_{V_1}Var_{V_2}E(p_{t=31} \vert  {\cal{D}}_{23}, V_1, V_2) \\ \nonumber
   & + Var_{V_1}E(p_{t=31} \vert  {\cal{D}}_{23}, V_1) \\ \nonumber
   &=  0.01469 + 0.0996 + 0.0017 \\
   &= 0.11599.
\end{align}
This is almost three times the variance as obtained by Draper.  We confirm his intuition that
structural uncertainty was much greater than assumed when making the decision
to launch the shuttle.   Moreover, Draper commented that other analyses could lead to larger
posterior variances.  So, \eqref{var_challenger_3term} is consistent with his intuition.

We can go beyond Draper's analysis by testing the terms in \eqref{var_challenger_3term}.
With $\tau = .05$, the hypotheses for testing whether the between
link functions variance is a substantial
portion of the posterior variance are
$$
H_0:  E\left( \frac{Var_{V_1}E(p_{t=31} \vert  {\cal{D}}_{23}, V_1)}{Var(p_{t=31} \vert  {\cal{D}}_{23})}\right) \geq 0.05
$$
versus
$$
H_1:  E\left( \frac{Var_{V_1}E(p_{t=31} \vert  {\cal{D}}_{23}, V_1)}{Var(p_{t=31} \vert  {\cal{D}}_{23})}\right) < 0.05.
$$
The test statistic for this test is
$$
\bar{z}_1 =  \frac{Var_{V_1}E(p_{t=31} \vert  {\cal{D}}_n, V_1)}{Var(p_{t=31}\vert  {\cal{D}}_{23})} = \frac{0.0017 }{0.11599} = 0.0147.
$$
The test described in Subsec. \ref{testing} gives an estimated achieved significance level
$\widehat{ASL}_1 = 0$. Thus, we conclude there is essentially no
between-link functions variance and can
ignore this term in the posterior predictive variance.

Next we test the between-models within-link functions term. Here the hypotheses are  
$$
H_0:  E\left( \frac{ E_{V_1}Var_{V_2}E(p_{t=31}\vert  {\cal{D}}_{23}, V_1, V_2) }{Var(p_{t=31}\vert  {\cal{D}}_{23})}\right) \geq 0.05
$$
versus
$$
H_1:  E\left( \frac{E_{V_1}Var_{V_2}E(p_{t=31}\vert  {\cal{D}}_{23}, V_1, V_2) }{Var(p_{t=31} \vert  {\cal{D}}_{23})}\right) < 0.05.
$$
The test statistic for this test is
$$
\bar{z}_2 =  \frac{ E_{V_1}Var_{V_2}E(p_{t=31} \vert  {\cal{D}}_n, V_1, V_2) }{Var(p_{t=31} \vert  {\cal{D}}_n)} = \frac{0.0996 }{0.11599} = 0.86 .
$$
Since the estimated contribution of the posterior predictive variance
from the between models within link functions variance is 86 percent, we can safely
assume this is a significant source of uncertainty.
More formally, we find that this test has an estimated achieved significance level
$\widehat{ASL}_2 = 1$, confirming our intuition.

Finally, we test the between-predictions within-models and link functions term.
The hypotheses are
$$
H_0:  E\left( \frac{ E_{V_1}E_{V_2}Var(p_{t=31} \vert  {\cal{D}}_{23}, V_1, V_2) }{Var(p_{t=31} \vert  {\cal{D}}_{23})}\right) \geq 0.05
$$
versus
$$
H_1:  E\left( \frac{ E_{V_1}E_{V_2}Var(p_{t=31}\vert  {\cal{D}}_{23}, V_1, V_2) }{Var(p_{t=31} \vert  {\cal{D}}_{23})}\right) < 0.05.
$$
The test statistic is
$$
\bar{z}_3 =  \frac{E_{V_1}E_{V_2}Var(p_{t=31} \vert  {\cal{D}}_n, V_1, V_2)}{Var(p_{t=31} \vert  {\cal{D}}_n)} = \frac{ 0.0147}{0.11599} = 0.127
$$
which gives $\widehat{ASL}_3 = 1$, i.e., non-rejection of the null.

Overall,  we conclude that the terms representing the between-models within-link functions variance
and the between-predictions within-models and links variance are terms that must be retained.

\subsection{Simulated binomial example}
\label{simbinom}

In this section, we study a simulated example following the same structure as the Challenger data.
That is we simulated $n$ observations from a binomial generalized linear model
$$
Y_i \vert p_i \sim Binomial(30,p_i),
$$
where $p_i = \frac{1}{1+e^{-X'_i\beta}}$, in which $X'_i \sim N(0,1)$ is a $1 \times 10$ 
vector of explanatory variables, and 
$$
\beta = (0.75, 0.25, -0.3,0.5,0,0,0,0,0,0)
$$ 
is a 
$10 \times 1$ vector of true regression parameters. Here we let 6 entries in $\beta$ be zero to represent a meaningful model selection problem. In this problem, we again recognize 
three sources of structural uncertainty: predictive uncertainty within-models and link functions 
(`predictions'),
models within link functions (`models', $V_2$), and link functions (`links', $V_1$).  Our goal 
with this example is to 
study the effect of the sample size on each each term in the posterior 
predictive variance decomposition,  as well as each of the test statistics.

We continue to use a three term decomposition like that in Subsec.  \ref{disaster}:
\begin{eqnarray}
\nonumber Var(Y_{n+1}\vert  {\cal{D}}_{n})=  E_{V_1}E_{V_2}Var(Y_{n+1} \vert  {\cal{D}}_{n}, V_1, V_2) + E_{V_1}Var_{V_1}E(Y_{n+1} \vert  {\cal{D}}_{n}, V_1, V_2) + Var_{V_1}E(Y_{n+1} \vert  {\cal{D}}_{n}, V_1) \\
\label{simbinomdecomp}
\end{eqnarray}
but our `$Y_{n+1}$' here is the number of successes in 30 trials, a random variable, as opposed to
a probability such as $p_{t=31}$.  Thus, the three forms of null hypotheses we want to test are
\begin{eqnarray}
H_0:  & E\left( \frac{Var_{V_1}E(Y_{n+1}\vert  {\cal{D}}_n, V_1)}{Var(Y_{n+1} \vert  {\cal{D}}_{n})}\right) \geq \tau, 
\quad H_0:  E\left( \frac{ E_{V_1}Var_{V_2}E(Y_{n+1} \vert  {\cal{D}}_n, V_1, V_2)  }{Var(Y_{n+1}\vert  {\cal{D}}_{n})}\right) \geq \tau, \nonumber \\ 
&\hbox{and} \quad
H_0:  E\left( \frac{ E_{V_1}E_{V_2}Var(Y_{n+1} \vert  {\cal{D}}_n, V_1, V_2)}{Var(Y_{n+1}\vert  {\cal{D}}_{n})}\right) \geq \tau \nonumber
\label{nulls}
\end{eqnarray}
for $\tau = .01, .05,$ and $.1$.  The corresponding test statistics are
\begin{eqnarray}
\bar{z}_1 =  &\frac{Var_{V_1}E(Y_{n+1}\vert  {\cal{D}}_n, V_1)}{Var(Y_{n+1}\vert  {\cal{D}}_{n})}, \quad
\bar{z}_2 =  \frac{ E_{V_1}Var_{V_2}E(Y_{n+1} \vert  {\cal{D}}_n, V_1, V_2) }{Var(Y_{n+1}\vert  {\cal{D}}_{n})}  \nonumber \\
&\hbox{and} \quad \bar{z}_3 =  \frac{E_{V_1}E_{V_2}Var(Y_{n+1} \vert  {\cal{D}}_n, V_1, V_2)}{Var(Y_{n+1}\vert  {\cal{D}}_{n})}.
\label{teststats}
\nonumber
\end{eqnarray}
To search for patterns in the testing procedure, we compiled the results of our tests for
various choices of $n$, $\tau$, and $T_j$ in \Cref{proplink} along with
the corresponding $\widehat{ASL}$ values.

We see that in column four the value of $\bar{z}_1$ generally
increases with sample size meaning we are less and less likely to reject its null.
In column three,  $\bar{z}_2$ is generally decreasing meaning we are more and more likely
to reject its null.   The second column shows stable inferences over sample size.
Taken together, \Cref{proplink} suggests that as sample size increases,
the link functions proportionately contribute more and more
to the overall variance where as the models contribute less and the 
predictions are stable.  We comment that the overall variance actually
decreases with sample size so the relative importance of, say, links, may
increase even as its absolute importance decreases.

\begin{table}[H]
\caption{\textbf{Results from testing in simulations}: The first column gives the sample size.
The second column corresponds to testing with $\bar{z}_3$ for $\tau = 0.01,0.05,0.1$.
The third and fourth columns are the same but for $\bar{z}_2$ and $\bar{z}_1$.
For instance, the entry $0.53 (1,1,1)$ means we have used $\bar{z}_3=.53$ to test its null
and the three ASL's for the values of $\tau$ were ones, indicating non-rejection.}
\label{proplink}
\centering
\begin{tabular}{rrrr}
  \hline
 $n$ & Predictions & Models & Links \\
  \hline
20 & 0.53 (1,1,1) & 0.43 (1,1,1) & 0.04 (1,0,0) \\
  30 & 0.64 (1,1,1) & 0.31 (1,1,1) & 0.05 (1,0.014,0) \\
  40 & 0.69 (1,1,1) & 0.24 (1,1,1) & 0.07 (1,0.39,0) \\
  50 & 0.77 (1,1,1) & 0.14 (1,0.99,1)& 0.09 (1,1,0.009) \\
  60 & 0.79 (1,1,1) & 0.11 (1,1,0.10) & 0.10 (1,1,0.52) \\
  70 & 0.82 (1,1,1) & 0.09 (1,0.97,10)  & 0.09 (1,1,0.98)\\
  80& 0.82 (1,1,1) & 0.07 (1,.303,0) & 0.15 (1,1,1) \\
 90 & 0.80 (1,1,1) & 0.04 (1,.15,0)  & 0.16(1,1,1) \\
  100 & 0.79 (1,1,1) & 0.04  (1,0.034,0)  & 0.17(1,1,1) \\
  110 & 0.78 (1,1,1) & 0.04  (1,0,0)  & 0.18(1,1,1) \\
  120 & 0.77 (1,1,1) & 0.04  (1,0,0) & 0.19 (1,1,1)\\
   \hline
\end{tabular}
\end{table}

These conclusions are reinforced by \Cref{simulated_binom}.
On the left panel we see that all four variances decrease with $n$, the
top curve representing the sum of the three lower curves.
The right panel shows that as expected the relative contribution of models
decreases monotonically.  It also shows that as $n$ increases, the curves for
links and predictions approach each other.  In simulation results not
shown here, the two curves actually cross around $n=475$ and suggest that
by $n=900$ or so that the curve for predictions will indicate a relatively small
contribution to the decreasing total variance curve compared to the relative 
contribution of links.  However, by this point, the total variance is so small
that the relative contributions of the terms does not matter much.
\begin{figure}[H]
\begin{center}
\begin{tabular}{cc}
\includegraphics[width=.45\columnwidth]{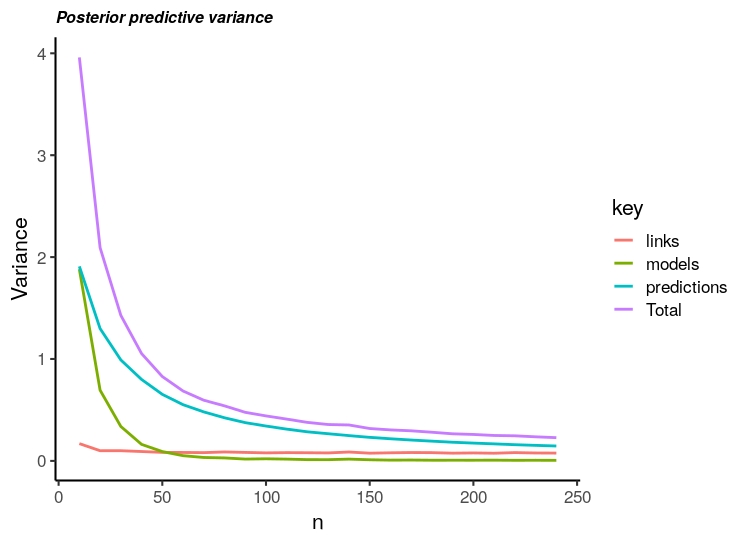}
\includegraphics[width=.45\columnwidth]{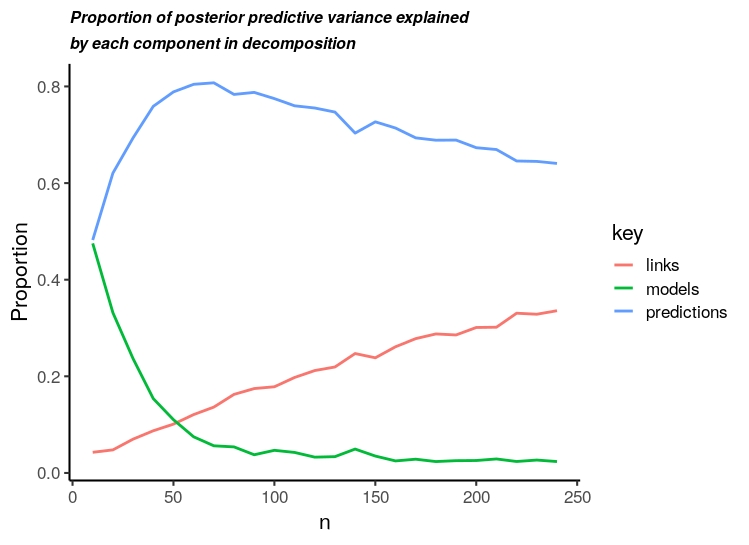}
\end{tabular}
\end{center}
\caption{\textbf{Plots of posterior predictive variances as a function of $n$
 for each term in the decomposition. } Left: The actual values of the terms in the total posterior
predictive variance.  Right: The proportions each term contributes to the total variance.
 Both sets of curves show the results for link function, models, and predictions.}
\label{simulated_binom}
\end{figure}

\section{Example:  Superconductivity Data}
\label{superconduct}

In this section, we analyze the data set {\sf Superconductivity} presented in \cite{Hamidieh:2018}.    
This data set has 81 explanatory variables of a physical or chemical nature to explain a response $Y$
representing temperature measurements (in degrees K) for when a compound
begins to exhibit superconductivity. The full data set has $n= 21263$, and we assume the relationship between $Y$ and the explanatory variables follows a signal plus noise structure, i.e.
$$ 
Y_i = f(X_i) + \epsilon_i 
$$  
for $i=1,\ldots , n$ and where $\epsilon_i \sim N(0,\sigma^2)$.
 \cite{Hamidieh:2018} used a linear model (LM) as a `benchmark model' and then
improved on it by developing an XGBoosting model -- a boosted, penalized tree model. 
The goal in their paper
was to minimize predictive error on a hold out set.  so, they did not consider 
the variance of predictive distributions.

Here we use 5 common predictive models; $m_1 =$ LM , $m_2 =$ neural nets (NN), $m_3 =$ projection pursuit regression (PPR), $m_4 =$ support vector machine with a radial kernel 
(SVM), and $m_5 =$ XGBoosting (XGB). Hence we write $V = (m_{1}, \ldots, m_{5})$. We 
note that these models do not correspond to a probability distribution (except for LM), 
but upon examining the residuals from the other fitted models, we confirmed that the
residuals were normally distributed.  So, we use a normal to form a predictive 
distribution for each of the models.  Moreover, to form the predictive distribution for each model
we fit the model using $n$ data points, and used the $n+1$ observation to predict
$Y_{n+1}$.  

Let the predictor from model $k$ be $\hat{f}_k$, $k=1, ..., 5$.
Then the next outcome is normally distributed, centered at the point predictor $\hat{f}_k(X_{n+1})$ 
with estimated variance
$$
\widehat{Var}\left(Y_{n+1} - \hat{f}_k(X_{n+1})\right) = \widehat{Var}(\hat{f}_k(X_{n+1})) + \widehat{Var}(\hat{\epsilon}_k).
$$
We calculated $\widehat{Var}(\hat{f}_k(X_{n+1}))$ by bootstrapping.
That is, we found a bootstrap distribution
for it and then took its variance.   For $\widehat{Var}(\hat{\epsilon}_k)$, we found the variance
of the residuals from the fitted model.

Formally, the predictive distribution for each model is given by
$$
p(Y_{n+1} \vert m_k) =  N\left(\hat{f}_k(X_{n+1}), \widehat{Var}(\hat{f}_k(X_{n+1})) + \widehat{Var}(\hat{\epsilon}_k) \right).
$$
Since these models are implemented in a frequentist sense, we use stacking to average over the models. The stacked predictive distribution for $Y_{n+1}$ is
$$
Y_{n+1} \sim \sum^5_{k=1} \hat{w}_k({\cal D}_n) p(Y_{n+1} \vert m_k).
$$
Now we present two examples, one where we randomly sample $500$ of the data points to form the predictive distributions and test whether the between-models variance is important, and another where we use the whole data set and perform the same test.
We will see that with the smaller sample size, the between-models variance term in
the decomposition using $V$ contributes about two-thirds of the total posterior predictive variance.
However, when the full data set is used, the estimated contribution from the between-models
term drops to about 4\%.

First we took a random sample of 500 observations from the whole data set. 
We set $B=200$ and $J =10000$. The results are given in \Cref{supercond_n500}.
Overall the results are unsatisfactory:  While stacking did put a lot of
weight on XGB, the procedure advocated by \cite{Hamidieh:2018},
its predictive coverage is weak.  On the other hand, SVM, which performed better in
terms of coverage got
a low stacking weight.    This is likely due to the difference between coverage (what proportion of
new data points are in a PI) and minimzing $L^2$ predictive error.  
\begin{table}[ht]
\caption{\textbf{Small sample results for the Superconductivity data}:  The only
model with reasonable coverage, SVM, has a low stacking weight.  Also, the stacking average
while giving superb coverage, does so at the cost of high variance. -- larger than the variance
of any single model.  This is consistent with high between-models variance.}
\label{supercond_n500}
\centering
\begin{tabular}{rrrrrrr}
  \hline
 & STK avg & LM & NN & PPR & SVM & XGB \\
  \hline
Stacking weights &  & 0.10 & 0.26 & 0.12 & 0.01 & 0.51 \\
 Pred.  Variance & 398.08 & 237.33 & 260.46 & 57.06 & 172.11 & 69.39 \\
  Coverage & 1.00 & 0.87 & 0.79 & 0.26 & 1.00 & 0.79 \\
   \hline
\end{tabular}
\end{table}

Using only $n=499$, the stacking predictive variance decompositions is
\begin{align*}
Var(Y_{500})({\cal D}_{499}) & = E_{V}Var(Y_{500} \vert V)({\cal D}_{499}) + Var_{V}E(Y_{500} \vert V)({\cal D}_{499}) \\
& = 135.85 + 262.23 \\
& = 398.08.
\end{align*}
To test whether the between-models variance term matters, we have the hypotheses
$$
H_0:  E\left( \frac{Var_{V}E(Y_{500} \vert  V)({\cal D}_{499})}{Var(Y_{500} )({\cal D}_{499})}\right) \geq \tau
$$
versus
$$
H_1:  E\left( \frac{Var_{V}E(Y_{500} \vert  V)({\cal D}_{499})}{Var(Y_{500} )({\cal D}_{499})}\right) < \tau,
$$
and the test statistic $\bar{z} = \frac{262.23}{398.08} =  0.66$.  For $\tau = 0.05$ we obtain 
$\widehat{ASL}=1$ and cannot reject the null. 
In this case,  we cannot reject the null for any reasonable value of $\tau$.  This
confirms what \Cref{supercond_n500} showed, namely that the between-models 
variance is much bigger than the between-predictions
within-models variance.


For contrast we redo the analysis using all the available data. Here we let
$B=50$,  and $J=5000$.   Note that here we only used 50
bootstrap samples due to computational burden.  The results are given in 
\Cref{tab_supercond_all_data1}.    With the larger sample size
we find that all coverages are one and superficially if we had to choose
one method it would be XGB.
\begin{table}[H]
\caption{\textbf{Re-analyzing with all available data}:   The predictive variances in
this table are bigger than in \Cref{supercond_n500} but the overall
stacking variance is less than half of the earlier value.  This suggests the between 
models variance is less important than with $n=500$.}
\label{tab_supercond_all_data1}
\centering
\begin{tabular}{rrrrrrr}
  \hline
 & STK avg & LM & NN & PPR & SVM & XGB \\
  \hline
Stacking weights &  & 0.01 & 0.26 & 0.21 & 0.01 & 0.52 \\
Pred.   Variance & 173.73 & 308.60 & 315.28 & 184.14 & 155.32 & 78.71 \\
  Coverage & 1.00 & 1.00 & 1.00 & 1.00 & 1.00 & 1.00 \\
   \hline
\end{tabular}
\end{table}

Now the variance decompositions is
\begin{align*}
Var(Y_{21263})({\cal D}_{21262}) & = E_{V}Var(Y_{21263} \vert V)({\cal D}_{21262}) + Var_{V}E(Y_{21263} \vert V)({\cal D}_{21262}) \\
& = 166.57 + 7.16 \\
& = 173.73.
\end{align*}
Again, we wish to test if the between models term is a substantial portion of the total predictive variance. The hypotheses are
$$
H_0:  E\left( \frac{Var_{V}E(Y_{21263} \vert  V)({\cal D}_{21262})}{Var(Y_{21263} )({\cal D}_{21262})}\right) \geq \tau
$$
versus
$$
H_1:  E\left( \frac{Var_{V}E(Y_{21263} \vert  V)({\cal D}_{21262})}{Var(Y_{21263} )({\cal D}_{21262}))}\right) < \tau,
$$
and the test statistic is  $\bar{z}  = \frac{7.16}{173.73} = 0.041$. We used different choices of $\tau$ and observed the results in \Cref{tab_supercond_all_data2}.
It is seen that for $\tau=0.05$ there is not enough evidence to say $T$ is statistically less than 
$\tau$, but for $\tau \geq 0.06$ the test rejects the null.   That is,
the relative contribution of the between-models variance to the
total stacking predictive variance is roughly between five and six percent.
We suggest that if a larger value of $B$ could have been used, the threshold for rejecting the
null would likely decrease to around $\tau = .05$.

\begin{table}[ht]
\caption{\textbf{$\widehat{ASL}$ for different choices of $\tau$}:  The reliability of the entries
is potentially limited because $B$ is low.}
\label{tab_supercond_all_data2}
\centering
\begin{tabular}{rrrrrrr}
  \hline
 $\tau$& 0.05 & 0.06 & 0.07 & 0.08 & 0.09 & 0.10 \\
  \hline
  $\widehat{ASL}$ & 0.16 & 0.03 & 0.003 & 0.0003 &  0 &  0\\
   \hline
\end{tabular}
\end{table}
Thus, with $n=500$, we could not reject the null at any reasonable value of $\tau$
however with the full data set
we could reject the null at $\tau$ around 6\%.

As a final point for this section, we confirm that by looking at \Cref{tab_supercond_all_data1}
and the testing results, the preferred method of \cite{Hamidieh:2018} is well-justified.
It gives high coverage and the smallest variance among the alternatives we used.
Moreover, XGB received the highest stacking weight, presumably because it had
the smallest cross-validated error.

\section{Discussion}
\label{discussion}

Here we have proposed a decomposition of the posterior predictive variance and the
stacking predictive variance.  The decompositions are based on representing modeling
choices by a discrete random variable $V$ and then iterating the
law of total variance for each component of $V$.  The predictive variances control the
width of prediction intervals so our decomposition lets us assess the contribution of each
source of variability in $V$ to the overall variance.  We proposed a testing procdure to
assess the relative contributions of the terms in the decomposition so that we
can, in principle, eliminate some components of $V$ thereby simplifying the
resulting prediction intervals where possible.  We show how our analysis proceeds
in a series of examples, three of which are extensions of an earlier analysis
presented in \cite{Draper:1995}.  We verify that our methods give intuitively
plausible results for multiple choices of $V$.

Our analysis has analogies to the classical Cochran's Theorem decomposition of total
squared error into a sum of quadratic forms with independent $\chi^2$ distributions.
We do not find as neat a distributional form, however, we show that the terms in
our decomposition of the total predictive variance
correspond to sums of $\chi^2$ random variables.  

A recurrent theme in our findings is the discrepancy between the relative contribution
of a variance term to the total variance and its absolute level.  
The relative importance of a term depends
on the sample size differently from the total variance.  In particular,
we see that if the
absolute level of variance is small enough, then it is not important how much
each term in the decomposition contributes.  

We conclude with the observation that there may be two different choices of 
$V=V_K$ that an analyst
may want to consider.   This leads to the question as to how to choose 
one over the other.   In Sec. \ref{example} and in Sec. \ref{superconduct}
we faced a special case of this problem when we reduced a one dimensional $V$ 
to a single model.  Our approach can be formalized as the following empirical
optimization.  Recall that
in expanding our model list, we want to ensure we have close to the proper coverage 
and the smallest variance possible among model lists with good coverage. This leads 
us to choose $K$ and $V$ in the following manner. 
First calculate estimated coverage using $g$-fold cross-validation or 
$g$ bootstrapping samples and define the 
estimated coverage to be 
$$
\hat{C}({\cal{V}}^{(K)}) = \frac{1}{g}\sum^g_{i=1}I_{ \{y_{i, new} \in PI({\cal{V}}^{(k)}) \}},
$$
where ${\cal{V}}^K$ is the model lsit corresponding to $V$.
Then for given $\alpha, \delta >0$, we choose 
$$
\hat{K} = \arg  \min_{k \in \{k \vert \hat{C}({\cal{V}}^{(k)})\in (1-\alpha -\delta, 1 - \alpha  + \delta) \}} Var(Y_{n+1} \vert {\cal{D}}_n)({\cal{V}}^{(k)}).
$$
That is, we choose the value of $K$ and the corresponding $V$
to minimize the posterior variance among all model lists that have 
estimated coverage $\delta$-close to the nominal $1-\alpha$ coverage. 
Despite this data-driven proposal the problem of model list selection remains 
both difficult and open.

\appendix

\section*{Acknowledgments}
Dustin acknowledges funding from the University of Nebraska
Program of Excellence in Computational Science.  Both authors acknowledge computational support from the Holland Computing Center at the University of Nebraska.

\bibliographystyle{siamplain}
\bibliography{references}

\end{document}